\newtheorem{theorem}{Theorem}
\newtheorem{lemma}{Lemma}
\newtheorem{assumption}{Assumption}
\newcommand{\Romannumeral}[1]{\uppercase\expandafter{\romannumeral#1}}
\title{Quantum phase classification via partial tomography-based quantum hypothesis testing}
\author[1,*]{Akira Tanji}
\author[2]{Hiroshi Yano}
\author[1,2,+]{Naoki Yamamoto}
\affil[1]{Department of Applied Physics and Physico-Informatics, Keio University, Hiyoshi 3-14-1, Kohoku, Yokohama 223-8522, Japan}
\affil[2]{Keio Quantum Computing Center, Keio University, Hiyoshi 3-14-1, Kohoku, Yokohama 223-8522, Japan}
\affil[*]{tanjikeio@keio.jp}
\affil[+]{yamamoto@appi.keio.ac.jp}
\begin{abstract}
Quantum phase classification is a fundamental problem in quantum many-body physics, traditionally approached using order parameters or quantum machine learning techniques such as quantum convolutional neural networks (QCNNs). 
However, these methods often require extensive prior knowledge of the system or large numbers of quantum state copies for reliable classification. 
In this work, we propose a classification algorithm based on the quantum Neyman-Pearson test, which is theoretically optimal for distinguishing between two quantum states. 
While directly constructing the quantum Neyman-Pearson test for many-body systems via full state tomography is intractable due to the exponential growth of the Hilbert space, we introduce a partitioning strategy that applies hypothesis tests to subsystems rather than the entire state, effectively reducing the required number of quantum state copies while maintaining classification accuracy. 
We validate our approach through numerical simulations, demonstrating its advantages over conventional methods, including the order parameter-based classifier, the QCNN, and the recently developed classical machine learning algorithm enhanced with quantum data. 
Our results show that the proposed method achieves lower classification error probabilities with fewer required quantum state copies compared to all of these baselines, while also reducing the training cost relative to the QCNN and the classical machine learning algorithm enhanced with quantum data, and further decreasing the classical computational time in comparison with the latter.
We additionally demonstrate scalability of our method in numerical experiments up to systems with 81 qubits.
These findings highlight the potential of quantum hypothesis testing as a powerful tool for quantum phase classification, particularly in experimental settings where quantum measurements are combined with classical post-processing.  
\end{abstract}
\begin{document}

\flushbottom
\maketitle
%
%
\thispagestyle{empty}


\section*{Introduction}

Quantum phase classification~\cite{Sachdev2011,Vojta2003,Belitz2000,Shopova2003,Sondhi1997} is a fundamental task for understanding the behavior of quantum many-body systems that undergo phase transitions. 
Unlike classical systems, where thermal fluctuations drive phase transitions, quantum phase transitions may occur even at zero temperature and are driven by changes in external parameters such as pressure, magnetic field, or chemical composition. 
These transitions are marked by quantum fluctuations and are characterized by changes in the ground state properties of the system. 
The study of quantum phase transitions provides insight into the critical behavior and universality classes that describe how different systems behave near critical points.
A key tool for distinguishing different phases of matter is an order parameter~\cite{Sachdev2011,Vojta2003,Belitz2000,Shopova2003,Sondhi1997,Shi2015,Haegeman2012,Pollmann2012}, particularly in the context of symmetry-breaking. 
In many cases, phases can be classified based on whether the system exhibits a non-zero local order parameter, which signals spontaneous symmetry breaking.
Mathematically, an order parameter $o$ is defined as the expectation value of an observable $O$, given by $o = \lim_{L\rightarrow\infty} \langle O \rangle$, where $L \to \infty$ indicates the thermodynamic limit and the notation $\langle \cdot \rangle$ denotes the expectation value with respect to the ground state of the system in the case of quantum phases.
The value of $o$ characterizes the phase of the system, such that $o \neq 0$ if the system is ordered, and $o = 0$ if it is disordered.
For example, the magnetization in a ferromagnet serves as an order parameter that distinguishes between the magnetically ordered and disordered phases~\cite{Sachdev2011,Vojta2003,Belitz2000}.
However, this traditional framework of local order parameters falls short when it comes to identifying topological phases. 
Topological phases~\cite{Haegeman2012,Pollmann2012,Haldane2017,Qi2011} of matter do not break any local symmetry and therefore cannot be characterized by a local order parameter. 
Instead, these phases are defined by global properties that reflect non-local correlations in the system.
For example, topological phases exhibit topological invariants such as Chern numbers and $\hat{Z}$ invariants that remain robust under continuous deformations of the system~\cite{Haldane2017,Qi2011}.

Recently, quantum machine learning (QML) has attracted increasing attention due to its potential advantages in learning from quantum data, supported by theoretical and numerical studies including applications to quantum phase classification~\cite{Biamonte2017,Huang2022,Huang2022manybody,Huang2021,Liu2025}.
One representative approach is a quantum convolutional neural network (QCNN)~\cite{Cong2019,Monaco2023,Liu2023,Pesah2021,Herrmann2022}.
QCNNs are quantum circuit architectures inspired by classical convolutional neural networks~\cite{LeCun1998,Lecun2015}, designed to analyze quantum data by applying a series of convolutional and pooling layers. 
QCNNs have been successfully applied to recognize various quantum phases, including symmetry-protected topological (SPT) phases, demonstrating their capability to identify complex quantum states.
A notable feature of QCNNs is their resilience to the barren plateau problem, which ensures that they remain trainable with gradient-based methods even for large system sizes~\cite{Pesah2021}. 
Additionally, QCNNs can be trained with small sets of labeled data, making them practical for tasks where training data is limited~\cite{Caro2022}. 
QCNNs have been implemented successfully on current quantum hardware, showing their feasibility for near-term quantum applications~\cite{Herrmann2022}.
A notable subclass of QCNNs is Exact QCNNs~\cite{Cong2019,Sander2025,Lake2025}, which provide an analytical solution for classifying quantum phases without requiring training. 
Numerical experiments have shown that Exact QCNNs exhibit advantages in sample complexity compared to traditional order parameter-based approaches, making them more effective in scenarios where minimizing the number of required measurements is crucial~\cite{Cong2019}. 
However, Exact QCNNs have the downside that they are known to exist only for a limited range of quantum phases~\cite{Cong2019,Sander2025,Lake2025}.

Other QML approaches have been developed that combine quantum experiments with classical processing, in contrast to the approaches that apply classical processing to train quantum models, such as the above-mentioned QCNN or many other applications \cite{Peruzzo2014,Farhi2014,Xiao2023,Xu2025,Cao2025}. 
This approach belongs to the class of Quantum Enhanced Classical Simulation (QESIM) studied in Ref.~\cite{Cerezo2025}, where initial data acquisition is performed via quantum experiments in no more than polynomial time, producing polynomial-size data that enhances the classical simulation algorithm. 
One such approach involves the use of an efficient classical representation, known as the classical shadow~\cite{Huang2020}, to construct a kernel function for quantum phase classification, among other applications~\cite{Huang2022manybody}.
Another approach is the low-weight Pauli propagation algorithm~\cite{Angrisani2025,Bermejo2024}, which leverages classical shadows to efficiently simulate quantum circuits on classical hardware within the range of low-weight Pauli operators. 
This method has been successfully applied to quantum phase classification using QCNNs~\cite{Bermejo2024}, referred to here as low-weight QCNNs. 
The study shows that QCNNs often operate within a classically simulable regime when restricted to low-body observables, raising questions about their quantum advantage. 
These methods demonstrate the potential of hybrid quantum-classical strategies for analyzing quantum phases while also emphasizing the need to identify problems where quantum computation offers a distinct advantage.

Following these lines of research, we propose a more simplified and efficient classification algorithm based on the quantum Neyman-Pearson test (also known as the Helstrom-Holevo test)~\cite{Helstrom1969,Hayashi2006,Nagaoka2007}, which is optimal for distinguishing between general two quantum states in the context of quantum hypothesis testing.
The classical Neyman-Pearson framework has been applied in machine learning for classification problems, where it provides a principled way to control error probabilities in classification~\cite{Tong2018, Scott2005}. 
Moreover, recent work has explored the use of quantum Neyman-Pearson test in the context of quantum machine learning and learning-theoretic analyses~\cite{Lloyd2020,Cheng2025,Banchi2021,Banchi2024,Blank2020}. 
The challenge in this framework is how to construct the corresponding quantum measurement from the training quantum states; for instance, a naive way using the full state tomography does not work, due to the exponential growth of the Hilbert space with system size~\cite{Banchi2024}. 
In this paper, we propose a quantum algorithm that exploits partial state tomography~\cite{Zhao2021,Bonet-Monroig2020} to efficiently construct the approximate quantum Neyman-Pearson test for quantum classification problems, particularly the quantum phase classification.
Our method can be viewed as a classical simulation algorithm supported by quantum experiments, in the sense that it constructs measurements on test data through classical processing of information obtained via partial tomography, and thus belongs to the QESIM framework. 
This method reduces the number of copies of quantum states required for classification while improving classification accuracy by effectively extracting relevant phase information from local measurements, making our algorithm both scalable and highly reliable for near-term quantum experiments.
We validate our approach through numerical simulations, demonstrating its advantages over conventional methods such as the order parameter, the QCNN, and the Exact QCNN. 
Our results show that it achieves lower classification error probabilities across various quantum phase classification tasks, which we attribute to the use of an approximately constructed quantum Neyman-Pearson test. 
Additionally, our method significantly improves training efficiency compared to the QCNN and the low-weight QCNN, as it does not rely on gradient-based variational learning but benefits from the approximate quantum Neyman-Pearson test. 
Notably, it achieves lower validation losses while requiring fewer training data copies, particularly compared to the QCNN. 
We also find that our method reduces the classical computational time complexity compared to the low-weight QCNN.
Furthermore, it maintains high classification accuracy across different system sizes, including cases with up to 81 qubits. 
These results highlight the potential of our method as a practical and efficient approach for quantum phase classification in experimental settings where quantum experiments are combined with classical processing.

\section*{Methods}
\label{sec:Methods}

Quantum hypothesis testing (QHT)~\cite{Helstrom1969,Hayashi2006,Nagaoka2007,Tej2018,Keski2021} is a fundamental tool in quantum information theory, designed to identify the true state of a quantum system from a set of possible states. 
Extending classical hypothesis testing principles to the quantum domain, QHT uses density operators to represent states and aims to develop optimal measurement strategies to minimize the probability of making incorrect decisions. 
This framework is essential for applications such as quantum communication and sensing.
In the most basic scenario, QHT involves two hypotheses where the task is to distinguish between two potential states, $\rho$ (the null hypothesis) and $\sigma$ (the alternative hypothesis). 
A key challenge in this process is balancing and minimizing the types of errors that can occur. 
Specifically, the Type-\Romannumeral{1} error happens when $\sigma$ is chosen when $\rho$ is true, while the Type-\Romannumeral{2} error occurs when $\rho$ is selected when $\sigma$ is correct. 
In other words, the Type-\Romannumeral{1} error happens when the alternative hypothesis $\sigma$ is accepted despite the null hypothesis $\rho$ being the true state, whereas the Type-\Romannumeral{2} error occurs when the null hypothesis $\rho$ is chosen even though the alternative hypothesis $\sigma$ is the correct state. 
Mathematically, the Type-\Romannumeral{1} error probability, denoted as $\alpha_n$, is given by
\begin{equation}\label{eq:type_one_error}
    \alpha_n = \Tr(\rho^{\otimes n} (I-M_n)) = 1 - \Tr(\rho^{\otimes n} M_n),
\end{equation}
where $\{M_n, I-M_n\}$ are two-outcome Positive Operator-Valued Measures (POVMs) associated with deciding in favor of $\rho$ and $\sigma$, respectively. 
Similarly, the Type-\Romannumeral{2} error probability, denoted as $\beta_n$, is defined as
\begin{equation}\label{eq:type_two_error}
    \beta_n = \Tr(\sigma^{\otimes n} M_n).
\end{equation}
Note that, in the above definitions, quantum states are assumed to be prepared in multiple $n$ copies, denoted by $\rho^{\otimes n}$ and $\sigma^{\otimes n}$, which are collectively or individually measured to distinguish between the two hypotheses. 
Measurement strategies must be carefully designed to manage these errors; 
in particular, symmetric approaches aim to minimize the total error probability, and asymmetric strategies focus on reducing one type of error while controlling the other within acceptable bounds. 
The quantum Neyman-Pearson test $\{ S(n,a), I-S(n,a) \}$ is defined by the POVM element
\begin{equation}\label{eq:qNeyman}
S(n,a) = \qty{\rho^{\otimes n} - e^{na}\sigma^{\otimes n} > 0},
\end{equation}
where $a$ is a hyperparameter that controls the balance between Type-\Romannumeral{1} and Type-\Romannumeral{2} error probabilities. 
The notation $\{A>0\}$ represents the projection operator onto the subspace spanned by the eigenvectors of a Hermitian operator $A$ with positive eigenvalues, i.e.,
\begin{equation}
\{A>0\}=\sum_{i:\lambda_i>0}\ketbra{\lambda_i}, \quad \text{if } A = \sum_i \lambda_i\ketbra{\lambda_i}.
\end{equation}
This test is well known in the theory of quantum hypothesis testing as the most powerful test for distinguishing two quantum states. 
That is, the test maximizes the power $1-\beta_n$, which is the probability of correctly rejecting the null hypothesis $\rho$ when the alternative hypothesis $\sigma$ is true, while controlling the Type-\Romannumeral{1} error probability $\alpha_n$ to be below a given significance level. 
The same holds if the roles of $\alpha_n$ and $\beta_n$ are reversed. 
Moreover, this test is employed to achieve the asymptotic optimality described by the quantum Stein’s lemma~\cite{Hiai1991,Ogawa2000} and the quantum Hoeffding’s theorem~\cite{Ogawa2004}.
Note that, when the two quantum states $\rho$ and $\sigma$ commute, the quantum Neyman-Pearson test is equivalent to the classical Neyman-Pearson test as described in Appendix. 
However, constructing this measurement with full state tomography requires an exponential number of copies of the quantum states or classical processing with respect to system size, making it impractical for large-scale quantum systems.


\begin{figure}[t]
    \centering
    \includegraphics[width=0.62\columnwidth]{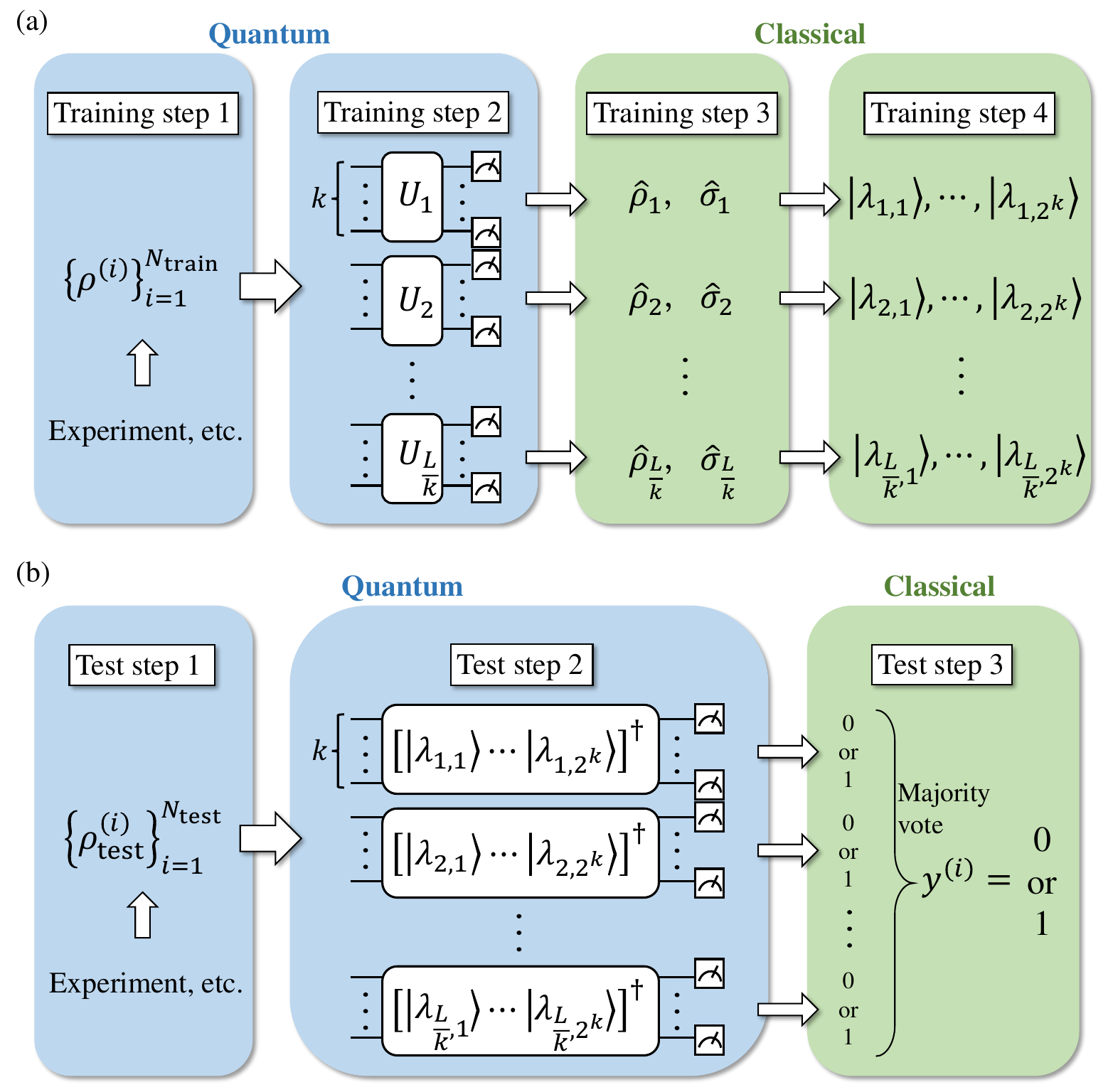}
    \caption{Schematic depiction of our method in the (a) training step and (b) test step. 
    Steps within the blue boxes represent quantum processes, while those within the green boxes represent classical processes. 
    In the training step (a), we perform partial tomography on all the quantum many-body states $\{\rho^{(i)}\}$ in a training dataset, obtaining their $k$-RDMs. 
    In the test step (b), for each test state $\rho_{\text{test}}^{(i)}$, the approximate quantum Neyman-Pearson test is conducted to obtain a prediction $y^{(i)}$.}
    \label{fig:Method}
\end{figure}

To overcome this scalability issue, we propose a method for approximately constructing a quantum Neyman-Pearson test via the decomposition of a quantum many-body state into smaller subsystems followed by their partial state tomography. 
Applying this method to the quantum phase classification problem allows us to reduce the error probability for single-copy data near the phase boundary and the number of copies of data needed to construct the measurement.
The key assumption is that target quantum phases could potentially be classified by dividing quantum many-body systems into groups of a few qubits. 
This stems from the fact that most condensed matter Hamiltonians exhibit few-body interactions, meaning they involve interactions that only extend among a few particles. 
Furthermore, many order parameters can be represented as linear combinations of local observables~\cite{Sachdev2011,Vojta2003,Shi2015}. 
Even in topological phases, which are often characterized by global order parameters~\cite{Haldane2017,Qi2011}, classification can sometimes be achieved using few-body observables~\cite{Huang2022manybody}.
This does not imply the existence of a local order parameter for topological phases, but only indicates that, in certain models, phases are identifiable from few-body information.

We here describe our classification method; the entire procedure is depicted in Fig.~\ref{fig:Method}. 
The key concept is a partitioned quantum Neyman-Pearson test with POVM
\begin{equation}\label{eq:partitioned qNeyman}
\begin{split}
S_j^{(0)}(n_{\text{ent}}, a) &= \qty{\rho_j^{\otimes n_{\text{ent}}} - e^{n_{\text{ent}}a} \sigma_j^{\otimes n_{\text{ent}}} > 0}, \\
S_j^{(1)}(n_{\text{ent}}, a) &= I - S_j^{(0)}(n_{\text{ent}}, a),
\end{split}
\end{equation}
where $\rho_j$ and $\sigma_j$ denote the $k$-qubit reduced density matrices ($k$-RDMs) of the full quantum states $\rho$ and $\sigma$ corresponding to the $j$-th $k$-qubit subsystem.
In the main text, we describe the method for the case $n_{\text{ent}}=1$; for the case $n_{\text{ent}}>1$, see the Additional numerical simulations section of the Appendix. 
The classifier, or equivalently the POVM corresponding to Eq.~\eqref{eq:partitioned qNeyman}, is constructed using given training data (a set of quantum states). 
We then apply the constructed POVM to classify new data, which we call test data. 
More specifically, the training step proceeds as follows, as shown in Fig.~\ref{fig:Method}(a):
\begin{enumerate}
    \item Given $N_{\text{train}}$ quantum many-body states $\{ \rho^{(i)}\}_{i=1}^{N_{\text{train}}}$ on $L$ qubits as training data and corresponding binary labels $\{ y^{(i)} \in \{0,1\} \}_{i=1}^{N_{\text{train}}}$.
    \item For each training data $\rho^{(i)}$, consider the $k$-RDMs $\{ \rho_j^{(i)} \}_{j=1}^{L/k}$ and estimate them as $\{ \hat{\rho}_j^{(i)} \}_{j=1}^{L/k}$ using partial tomography. (Here, we simply assumed that $L$ is divisible by $k$.)
    \item For each group $j$, calculate the ensemble average of matrices obtained via tomography for states labeled as 0, $\hat{\rho}_j = \sum_{i: y^{(i)}=0} p^{(i)} \hat{\rho}_j^{(i)}$, and for states labeled as 1, $\hat{\sigma}_j = \sum_{i: y^{(i)}=1} p^{(i)} \hat{\rho}_j^{(i)}$. 
    \item For each group $j$, compute the eigenvalues $\{ \lambda_{j,l} \}_{l=1}^{2^k}$ and eigenvectors $\{ \ket{\lambda_{j,l}} \}_{l=1}^{2^k}$ of $\hat{\rho}_j - e^a \hat{\sigma}_j$, and construct the gate implementation of the unitary matrix $[\ket{\lambda_{j,1}}, \ldots, \ket*{\lambda_{j,2^k}}]^\dagger$ to perform the partitioned quantum Neyman-Pearson test with POVMs $S_j^{(0)}(n_{\text{ent}}=1, a)= \{ \hat{\rho}_j - e^a \hat{\sigma}_j > 0 \} = \sum_{l: \lambda_{j,l} > 0} \ketbra{\lambda_{j,l}}$ and $S_j^{(1)}(n_{\text{ent}}=1, a)= \sum_{l: \lambda_{j,l} \leq 0} \ketbra{\lambda_{j,l}}$, which are created based on the estimators $\hat{\rho_j}$ and $\hat{\sigma_j}$. 
\end{enumerate}

The test step proceeds as shown in Fig.~\ref{fig:Method}(b):
\begin{enumerate}
    \item Given $N_{\text{test}}$ quantum many-body states $\{ \rho_{\text{test}}^{(i)} \}_{i=1}^{N_{\text{test}}}$ on $L$ qubits as test data.
    \item For each test data $\rho_{\text{test}}^{(i)}$, consider the $k$-RDMs $\{ \rho_{\text{test},j}^{(i)} \}_{j=1}^{L/k}$, and for each group $j$, perform the POVMs $\{ S_j^{(0)}(n_{\text{ent}}=1, a),~S_j^{(1)}(n_{\text{ent}}=1, a) \}$ corresponding to the quantum Neyman-Pearson test on each group, utilizing the unitary matrix obtained in Training step 4. 
    \item For each $i$, take a majority vote of the measurement results across all groups $j=1,...,L/k$, classifying the test data as $y^{(i)} = 0$ if the former POVM is more frequently measured, and as $y^{(i)} = 1$ otherwise.
\end{enumerate}
In Training step 1, we assume that quantum many-body states serving as training data can be prepared on a quantum device using quantum state preparation algorithms or experimental methods, with known corresponding phase labels. 
In Training step 2, we obtain estimates of $k$-RDMs $\{ \hat{\rho}_j^{(i)} \}_{j=1}^{L/k}$ via partial tomography techniques, such as a classical shadow~\cite{Huang2020}.
When no prior knowledge about the system is available, both the subsystem size $k$ and the way of partitioning can be treated as hyperparameters. These can be tuned using training data, for instance by optimizing the classification performance on validation datasets.
Training step 3 involves matrix calculations on a classical computer, where ensemble averaging corresponds to replacing composite quantum hypothesis tests with simple ones, with probabilities $p^{(i)}$ typically uniform in phase classification tasks but potentially variable.
Training step 4 also involves classical matrix calculations, constructing unitaries for performing the approximate quantum Neyman-Pearson test, 
the gate implementation of which can be efficiently constructed for small $k$-qubit groups~\cite{Dawson2006,Ge2024}.
Specifically, we perform an eigendecomposition of $\hat{\rho}_j - e^a \hat{\sigma}_j$ classically and construct the corresponding unitary matrix. 
This unitary is then naively embedded into the quantum circuit as if it were an amplitude encoding technique. 
As a result, the circuit depth is exponential in $k$ but independent of the whole system size $L$, and the classical computational time scales exponentially in $k$ and linearly in $L$.
Here, $a$ is an arbitrary real number, which is set to $0$ when no preference is given to either quantum phase. 
When the task requires emphasizing the detection of label 0 (1), such as in anomaly detection, one can set $a$ to a smaller (larger) value accordingly. 
To achieve a specific balance between Type-\Romannumeral{1} and Type-\Romannumeral{2} error probabilities, $a$ may be tuned based on training data, and the number of additional quantum state copies required for tuning can be reduced by performing a pre-tuning of $a$ based on estimated RDMs.
Although it is possible to set $n_{\text{ent}} > 1$, numerical results presented in the Additional numerical simulations section of the Appendix indicate that this had limited significance.

In Test step 1, we assume that test data can be prepared on a quantum device similarly to training data. 
Test step 2 implements the two-outcome POVMs for each group, as shown in Fig.~\ref{fig:Method}(b), by measuring in the basis $\{ \ket{\lambda_{j,l}} \}_{l=1}^{2^k}$ and post-selecting based on whether the eigenvalue $\lambda_{j,l}$ of the measured basis state is positive or not. 
Test step 3 completes the classification of the quantum phase by taking the majority vote POVM across all groups, which aggregates the results of partitioned quantum Neyman-Pearson tests applied to $k$-RDMs. 
It is defined as
\begin{equation}
M_n = \Pi_{\text{maj}}^{\otimes (n/n_{\text{ent}})}(n_{\text{ent}}, a),
\end{equation}
where $n_{\text{ent}}$ is the same number in Eq.~\eqref{eq:partitioned qNeyman}, i.e., the number of copies used per subsystem in each individual Neyman-Pearson test. 
The majority vote POVM element is given by
\begin{equation}
\Pi_{\text{maj}}(n_{\text{ent}}, a) = \sum_{\substack{x_1, \dots, x_{L/k} \\ \sum_j x_j < L/2k}} \bigotimes_{j=1}^{L/k} S_j^{(x_j)}(n_{\text{ent}}, a),
\end{equation}
where $L$ is the entire system size. 
In this step, each $j$-th $k$-qubit subsystem is tested independently using a partitioned quantum Neyman-Pearson test. 
Then, the majority vote POVM element $\Pi_{\text{maj}}(n_{\text{ent}}, a)$ selects the outcome for which the majority of the subsystems yield $S_j^{(0)}(n_{\text{ent}}, a)$. 
The introduction of the majority vote is motivated by the absence of a clear procedure for aggregating measurement results obtained from partitioned quantum Neyman-Pearson tests.
Since these tests are applied independently to each $k$-qubit subsystem, a majority vote offers a simple and natural way to combine local information into a global decision. 
Moreover, in the majority vote section of the Appendix, we provide a brief rationale for this choice by showing that, under certain conditions, the majority vote suppresses the variance of the estimator.
Although this does not yet constitute a fully rigorous justification, it supports the use of the majority vote as a reasonable and practically effective aggregation strategy in our framework.

\section*{Numerical results}
\label{sec:Results}
This section presents numerical results for quantum phase classification using various methods from the three aspects: error probabilities, training costs, and scalability. 
More precisely, we first evaluate the Type-\Romannumeral{1} and Type-\Romannumeral{2} error probabilities to assess classification performance. 
We then compare our method with the QCNN and the low-weight QCNN~\cite{Bermejo2024,Angrisani2025}, in terms of the training costs. 
Furthermore, we examine the scalability of our method by analyzing its performance across different system sizes up to 81 qubits.
In the Appendix, the settings of simulations and additional numerical simulations are provided. 

\subsection*{Model and settings}

We conduct numerical simulations for quantum phase classification of the one-dimensional cluster-Ising model
\begin{equation}\label{eq:cluster-Ising}
H = \sum^{L}_{i=1}\left(X_i - J_1Z_iZ_{i+1} - J_2Z_{i-1}X_iZ_{i+1}\right),
\end{equation}
where $X_i(Z_i)$ are the Pauli $X(Z)$ operators on the $i$-th qubit. 
$J_1$ and $J_2$ are tunable coupling coefficients. 
The ground states of this many-body Hamiltonian, as shown in Fig.~\ref{fig:phase cluster-Ising}, exhibit four distinct phases: ferromagnetic (FM), antiferromagnetic (AFM), symmetry-protected topological (SPT), and trivial~\cite{Verresen2017}. 
The order parameter for the FM phase 
\begin{equation}\label{eq:order FM}
O_{\text{FM}} = \frac{1}{L}\sum^{L}_{i=1}Z_i
\end{equation}
is a linear combination of local observables, and the trivial phase is similarly detected by a linear combination of local observables. 
In contrast, the order parameter for the SPT phase is expressed by a global observable 
\begin{equation}\label{eq:order SPT}
O_{\text{SPT}} = Z_1X_2X_4...X_{L-3}X_{L-1}Z_L,
\end{equation}
which characterizes the topology of the system~\cite{Haegeman2012,Pollmann2012}.
We refer to the classification between the trivial and FM phases as \textit{Trivial vs. FM}, and the classification between the trivial and SPT phases as \textit{Trivial vs. SPT}.
For each case, we employed five methods for quantum phase classification: order parameter, QCNN, Exact QCNN, low-weight QCNN, and our method. 
The test data, which are quantum states we want to phase classify, are common across all methods and consist of 100 ground states near the phase boundaries depicted in each of Fig.~\ref{fig:phase cluster-Ising}(a) and Fig.~\ref{fig:phase cluster-Ising}(b), respectively. 
The training data, needed to construct the QCNN and our method, comprise 20 ground states shown in Fig.~\ref{fig:phase cluster-Ising}(a) and Fig.~\ref{fig:phase cluster-Ising}(b), with labels assigned as $y^{(i)} = 0$ for the trivial phase and $y^{(i)} = 1$ for the other phase.
For numerical simulations of large system sizes ($L=27$ and 81 qubits), we utilize Matrix Product States (MPS)~\cite{Schollwöck2005, Vidal2007}. 
We employ the finite-size Density Matrix Renormalization Group (DMRG) algorithm~\cite{Schollwöck2005, Vidal2007} with the maximum bond dimension 200 to prepare approximate ground states in the MPS representation. 
On the other hand, for small system sizes ($L=15$ qubits), we use state vectors and the exact diagonalization algorithm.

Our method uses $k=2$ qubits per group in the case of Trivial vs. FM, and $k=3$ qubits in the case of Trivial vs. SPT. 
This choice is due to the fact that the FM phase and the SPT phase in the model described above can be classified using low-body observables~\cite{Huang2022manybody}.
For dividing the quantum many-body state, we assign $k$ neighboring qubits to the same group, starting from one end of the chain, since the model is one-dimensional. 
While this grouping approach is straightforward for one-dimensional models, it is not for models in two or higher dimensions. 
However, we show in the Additional numerical simulations section of the Appendix that even a simple dividing strategy works sufficiently well for two-dimensional models.
We employ partial tomography as Training step 2, utilizing a partitioned classical shadow~\cite{Huang2020} approach for its rapid convergence facilitated by random measurements. 
More precisely, for each partitioned group $j$, unitary operators $\{U_j\}$ are sampled uniformly at random from the $k$-qubit Clifford group, followed by computational basis measurements yielding outcome $\ket*{\hat{b}}$, to obtain snapshots $\{ (2^k + 1) U_j^\dagger \ketbra*{\hat{b}} U_j - I\}$. 
The sample average of these snapshots directly forms $\hat{\rho}_j^{(i)}$, constructing it efficiently from these measurements.

\begin{figure}[htp!]
    \centering
    \includegraphics[width=0.5\columnwidth]{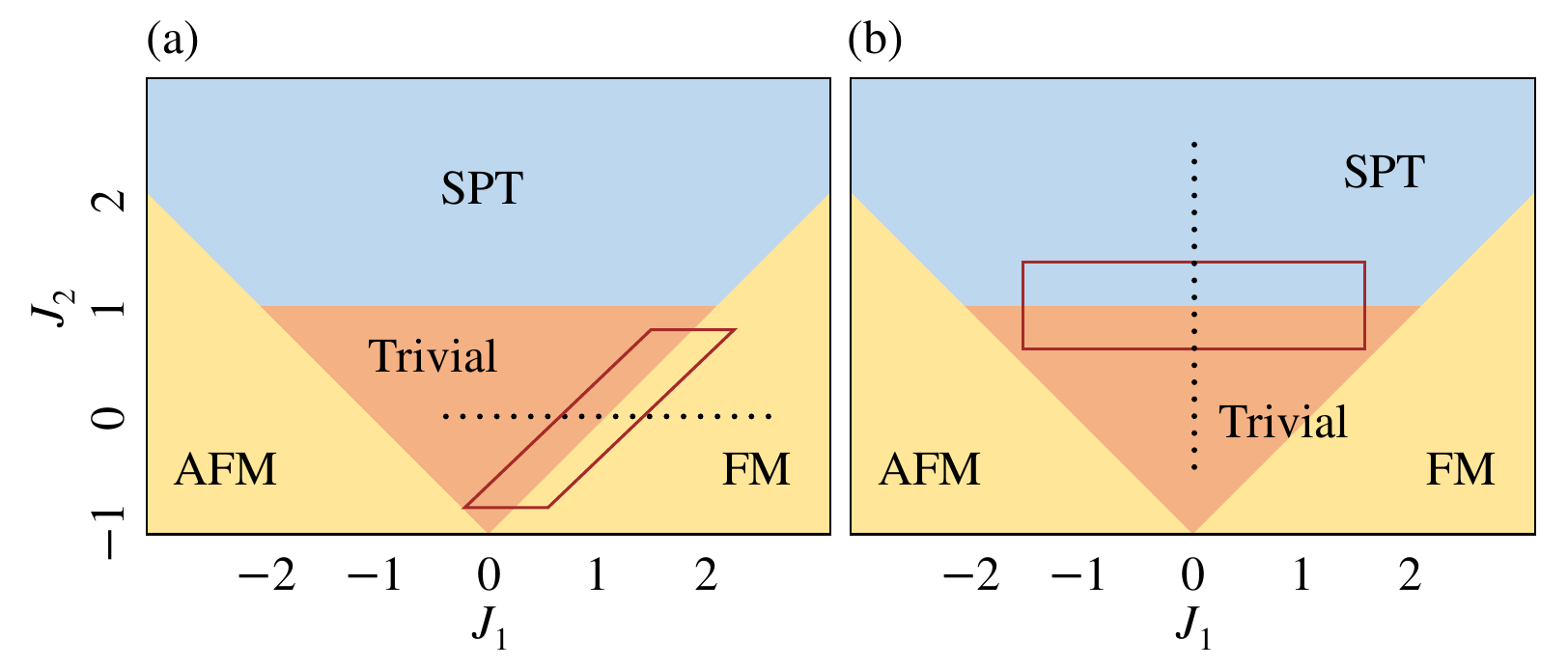}
    \caption{Quantum phase diagram of the ground state of the Hamiltonian in Eq.~
    (\ref{eq:cluster-Ising}), along with the training and test data for (a) Trivial vs. FM and (b) Trivial vs. SPT cases. 
    The 20 black dots represent the training data, while the test data consist of 100 randomly selected points within the red box near the phase boundary.}
    \label{fig:phase cluster-Ising}
\end{figure}

\subsection*{Error probabilities}

We calculate the Type-\Romannumeral{1} and Type-\Romannumeral{2} error probabilities, $\alpha_n$ defined in Eq. \eqref{eq:type_one_error} and $\beta_n$ in Eq. \eqref{eq:type_two_error}, for the test data of quantum phase classification in the cases of Trivial vs. FM and Trivial vs. SPT, using four methods: order parameter, QCNN, Exact QCNN, and our method. 
To calculate the error probabilities, we take the average error probabilities for all test data in each phase, meaning that we replace the composite hypotheses with simple ones.
The results for a 27-qubit system ($L = 27$) are shown in Fig.~\ref{fig:Error probabilities}. 
Since lower error probabilities with fewer copies indicate better performance, methods represented in the lower left of panel (a), and lower positions in panel (b), are preferable.
Panel (a) displays the trade-off between these two types of error probabilities for each methods. 
The balance between these error probabilities is controlled by a hyperparameter, which was varied over a wide range to produce the curves shown in the figure.
For our method, the hyperparameter $a$ in Eq.~\eqref{eq:partitioned qNeyman} and Training step 4 is selected from 20 evenly spaced values in the interval $[-1, 1]$. 
For the other methods, the relevant hyperparameters are finely discretized over a dense grid, as described in Appendix, resulting in curves rather than discrete points.
The total number of training data copies used for our method is 600 or 2,400; 
more precisely, for the former case, $\text{training shots} = N_{\text{train}} \times T_{\text{state}} = 20 \times 30 = 600$, where $T_{\text{state}}$ is the number of shots per state (or equivalently the number of snapshot for constructing the shadow) in the training dataset. 
In contrast, the QCNN requires exact calculations of output expectation values to ignore the estimation errors, resulting in a total of $N_{\text{train}} \times N_{\text{epoch}} \times 2 = 20 \times 150 \times 2 = 6{,}000$ evaluations, where $N_{\text{epoch}}$ is the number of epochs used for training the QCNN and $\times 2$ means two times expectation estimation involved in the optimizer described below. 
The training cost for our method is therefore significantly lower compared to that of the QCNN.
Additionally, the Simultaneous Perturbation Stochastic Approximation (SPSA) optimizer~\cite{Spall1998,Spall1998overview} is used for training the QCNN, where we estimate two expectation values to evaluate the gradient of the loss function. 
Since the number of training shots used for each expectation estimation is not explicitly considered here, the expectation values for the QCNN are computed exactly, which corresponds to taking the number of copies of training data per expectation value estimate as $T_{\text{est}} = \infty.$

Note that the problem settings differ between those for the order parameter and the Exact QCNN, and those for our method and the QCNN. 
The order parameter and the Exact QCNN operate without training data but with prior knowledge about the quantum phases, whereas our method and the QCNN rely on training data without any prior information on the phase. 
In the former setting, for the Trivial vs. SPT case, the Exact QCNN demonstrates significant improvements in error probabilities compared to the order parameter. 
Yet, for the Trivial vs. FM case, the order parameter combined with a Bayesian test (described in the Details of methods section of the Appendix) using an appropriately chosen prior distribution achieves lower error probabilities than the Exact QCNN. 
In the latter setting, in both cases, our method demonstrates substantial improvements over the QCNN in terms of error probabilities, despite using significantly fewer training data copies (as well as training costs, detailed in the next subsection).
While a direct comparison must be made with caution due to the above differences in problem settings, our method shows great potential for outperforming the order parameter and the Exact QCNN in terms of error probabilities for both cases.
Particularly for the Trivial vs. SPT case, it is especially remarkable that our method is expected to achieve much lower error probabilities than the commonly supported method, the order parameter.
These results are obtained for a small number of test data copies, specifically $n = 1$ or at most 20.
However, as suggested by Fig.~\ref{fig:Error probabilities}(b), our method is anticipated to achieve better performance than the other methods even for larger numbers of test data copies.

\begin{figure*}[htp!]
    \centering
    \includegraphics[width=1\columnwidth]{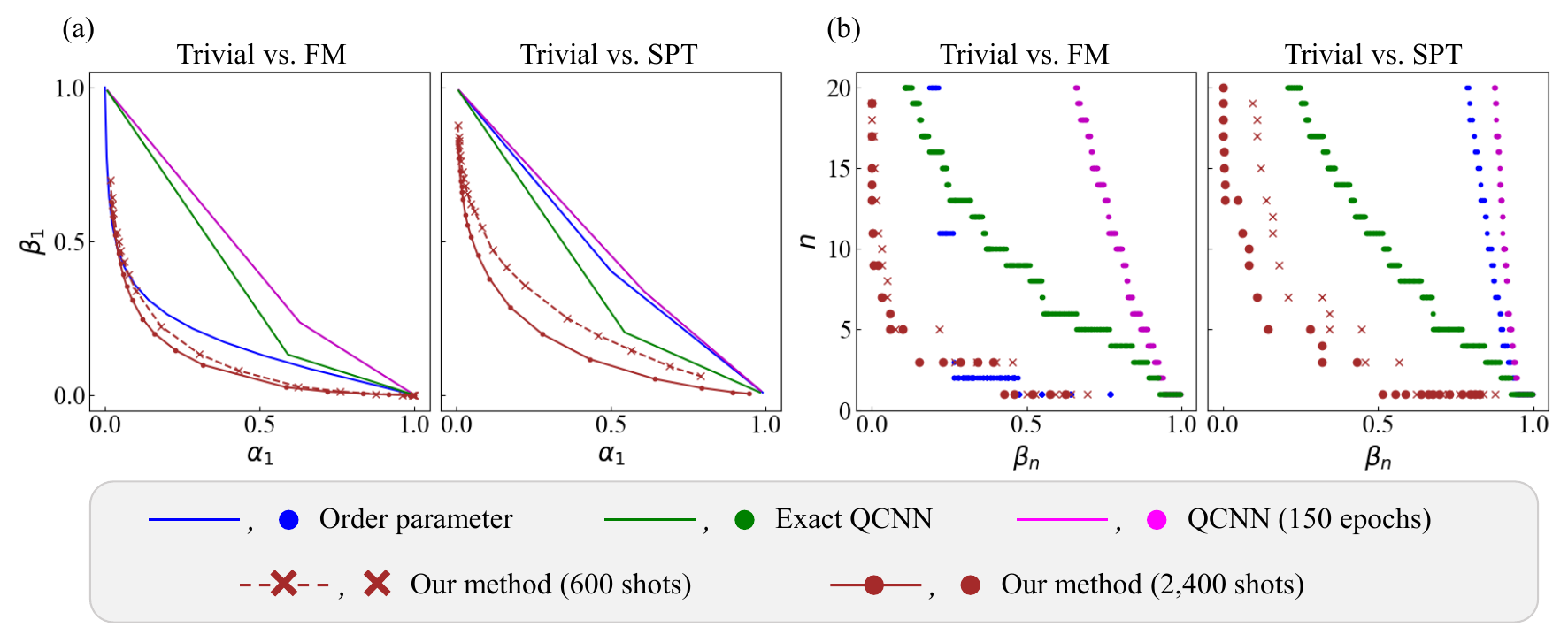}
    \caption{Type-\Romannumeral{1} and Type-\Romannumeral{2} error probabilities, $\alpha_n$ and $\beta_n$, for the test data in the order parameter, Exact QCNN, QCNN, and our method on $L=27$ qubits in the Trivial vs. FM case and Trivial vs. SPT case. 
    Panel (a) shows the error probabilities $\alpha_{1}$ and $\beta_{1}$ for a single-copy test dataset ($n = 1$), whereas Panel (b) shows the number of test data copies $n$ required to achieve $\beta_n$ under the condition $\alpha_n \leq 5\,\%$.}
    \label{fig:Error probabilities}
\end{figure*}

\subsection*{Training costs}
\label{sec:Training costs}

We present the validation loss, expressed as the Mean Squared Error (MSE) loss between the test data and validation labels with respect to the number of training data copies, evaluated across several cases:
\begin{equation}\label{eq:validation loss}
\text{MSE} = \frac{1}{N_{\text{test}}}\sum^{N_{\text{test}}}_{i=1}\qty(f(\rho^{(i)}_{\text{test}}) - y^{(i)}_{\text{test}})^2,
\end{equation}
where $y^{(i)}_{\text{test}} \in \{0,1\}$ is the validation label for the test data $\rho^{(i)}_{\text{test}}$, and where $f(\rho^{(i)}_{\text{test}})$ denotes the probability that $\rho^{(i)}_{\text{test}}$ is classified as validation label $y^{(i)}_{\text{test}}=1$.

We compare our method with the QCNN, in terms of the number of training data copies. 
In our method, the probability of classifying a test label $y^{(i)}_{\text{test}}=1$ is set as $f(\rho^{(i)}_{\text{test}})$ in Eq.~(\ref{eq:validation loss}), using $a=0$ to give equal weight to both labels. 
For the QCNN, the output expectation value is used as $f(\rho^{(i)}_{\text{test}})$. 
The resultant learning curves for a 15-qubit system ($L = 15$) are shown in Fig.~\ref{fig:Comparison with QCNNs}. 
Note that each of the four panels has different scales and ranges for the horizontal and vertical axes. 
Note also that the QCNN uses 1,000 copies of training data per expectation value estimation in this experiment (i.e., $T_{\text{est}} = 1{,}000$).
More precisely, Panel (a) shows results for our method, with training shots calculated as $N_{\text{train}} \times T_{\text{state}} = 20 \times T_{\text{state}}$, shown on an axis in units of $10^3$. 
Panel (b) shows results for the QCNN, with training shots calculated as $N_{\text{train}} \times T_{\text{est}} \times N_{\text{epoch}} \times 2 = 20 \times 1{,}000 \times N_{\text{epoch}} \times 2 = 40{,}000 \times N_{\text{epoch}}$, shown on an axis in units of $10^6$. 
These results show that our method achieves lower validation losses while using fewer than one-thousandth of the training data copies required by the QCNN.
This significant improvement is attributed to the fact that gradient-based variational training methods generally require a large number of training data copies, which is not needed for our method. 

\begin{figure*}[htp!]
    \centering
    \includegraphics[width=1\columnwidth]{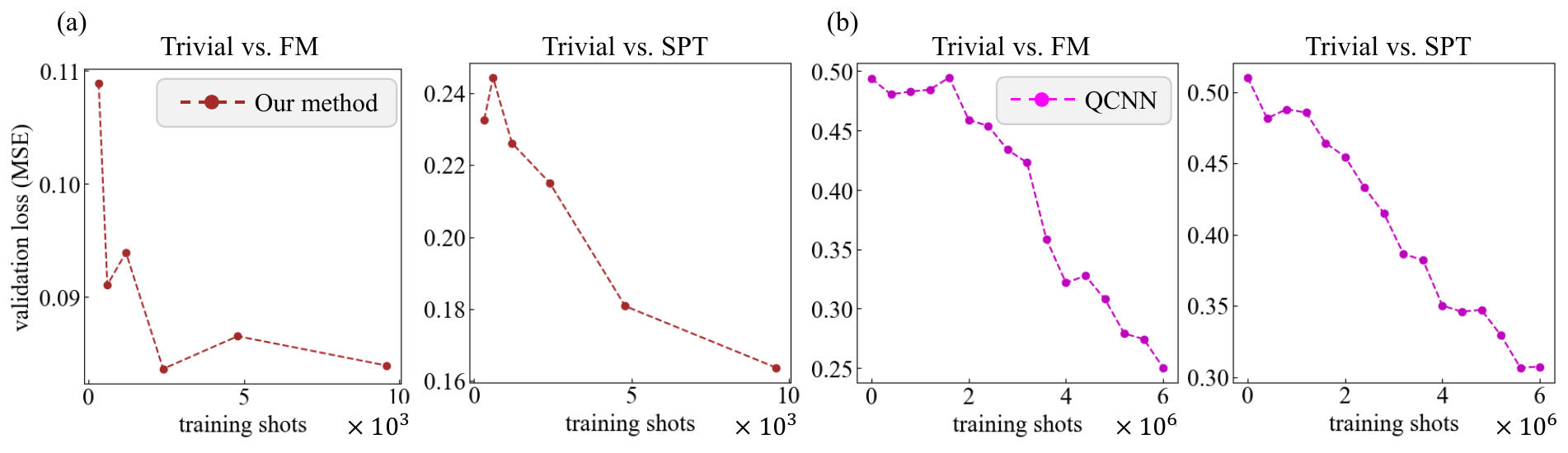}
    \caption{Learning curves for our method and the QCNN on $L = 15$ qubits in the Trivial vs. FM case and Trivial vs. SPT case. 
    Panels (a) and (b) show the results for our method and the QCNN, respectively. 
    The training shots represent the total number of training data, and the validation loss (MSE) is defined in Eq.~(\ref{eq:validation loss}). 
    }
    \label{fig:Comparison with QCNNs}
\end{figure*}


Next, we compare our method with the low-weight QCNN, in terms of the required number of training data. 
The learning curves for a 15-qubit system ($L = 15$) for both our method and the low-weight QCNN are shown in Fig.~\ref{fig:Comparison with low-weight QCNNs}; 
the results for our method are the same as those in Fig.~\ref{fig:Comparison with QCNNs}, while the total training shots of the low-weight QCNN is calculated as $N_{\text{train}} \times T_{\text{state}} = 20 \times T_{\text{state}}$. 
This figure indicates that our method achieves a lower validation loss compared to the low-weight QCNN with the same number of training data. 
Furthermore, the low-weight QCNN requires fewer training data than the QCNN shown in Fig.~\ref{fig:Comparison with QCNNs} to reach comparable validation loss levels.

\begin{figure}[htp!]
    \centering
    \includegraphics[width=0.8\columnwidth]{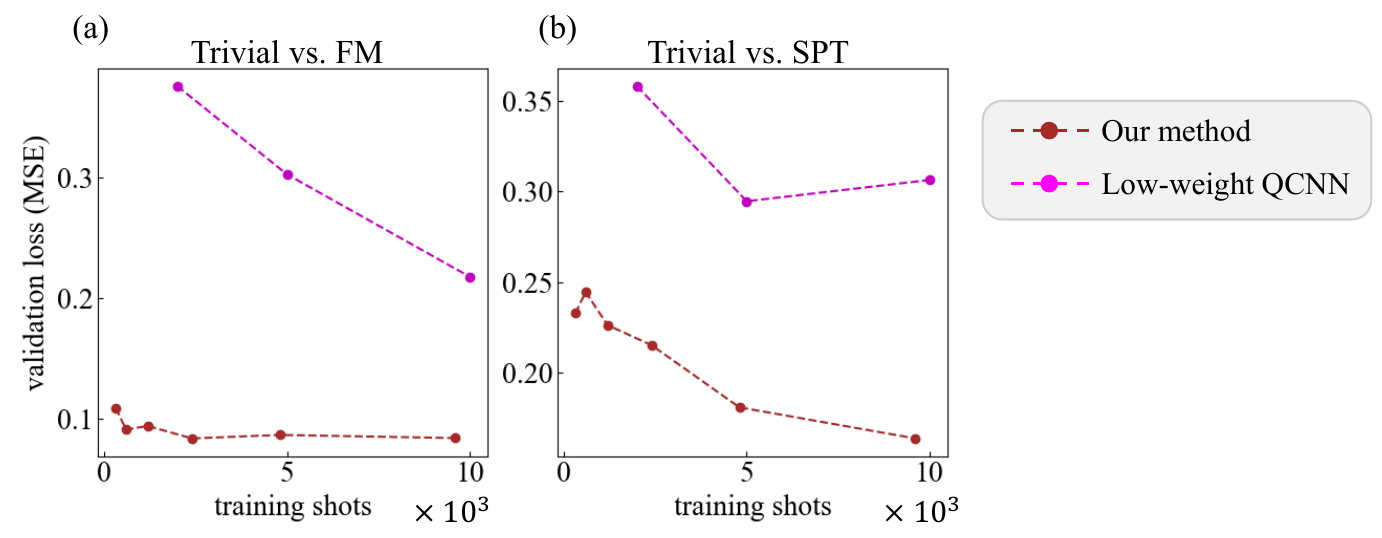}
    \caption{Learning curves for our method and the low-weight QCNN on $L = 15$ qubits. 
    Panel (a) corresponds to the Trivial vs. FM case, and Panel (b) to the Trivial vs. SPT case; both are shown in units of $10^3$. 
    The results for our method are the same as those in Fig.~\ref{fig:Comparison with QCNNs}.}
    \label{fig:Comparison with low-weight QCNNs}
\end{figure}

We also compare our method and the low-weight QCNN, in terms of the classical computational time complexity required to construct the measurement process. 
In our method, the system size $L$ is divided into groups of $k$ qubits. 
For each group, classical processing involves taking an ensemble average of matrices obtained from partial tomography (Training step 3) and performing an eigendecomposition to construct the gate implementation for the quantum Neyman-Pearson test (Training step 4). 
This classical processing is repeated $L/k$ times for each group. 
The ensemble averaging for a $k$-qubit matrix naively requires $\order{2^{2k}}$ time, eigendecomposition requires $\order{2^{3k}}$ time~\cite{Francis1961,Francis1962}, and constructing a general unitary gate implementation naively requires $2^{\order{k}}$ time~\cite{Dawson2006}. 
Thus, the classical computational time complexity of our method is $\frac{L}{k} \times \qty(\order{2^{2k}} + \order{2^{3k}} + 2^{\order{k}}) = 2^{\order{k}}L$.
On the other hand, simulating a QCNN of system size $L$ (typically with $\order{\log(L)}$ depth~\cite{Cong2019}) up to weight $k'$ for the low-weight QCNN requires $\order{L^{k'} \log(L)}$ time for each expectation value calculation~\cite{Angrisani2025}. 
Our method therefore has a lower classical computational time complexity than the low-weight QCNN; note that, thus, it does not imply an advantage as the quantum algorithm in all aspects. 
For example, if constant $k$ and $k'$ are chosen, the low-weight QCNN runs in $\order{\text{poly}(L) \log(L)}$ time for each expectation value calculation, whereas our method runs in $\order{L}$ time.

\subsection*{Scalability}

Finally, we examine the scalability of our method by comparing its Type-\Romannumeral{1} and Type-\Romannumeral{2} error probabilities of the test data for different system sizes $L$. 
The number of copies for the quantum Neyman-Pearson test is fixed to $n=1$.
The results for systems with $L = 15$, 27, and 81 qubits are shown in Fig.~\ref{fig:Scalability}. 
For each system size, only 600 copies of training data are used, and the other settings such as the hyperparameter $a$ are identical to those for our method in Fig.~\ref{fig:Error probabilities}.

Fig.~\ref{fig:Scalability} suggests that our method achieves lower error probabilities as the system size increases. 
This trend probably stems from the fact that quantum phase transitions become clearer in larger spin chains, as suggested by the definition of order parameters $o = \lim_{L\rightarrow\infty} \langle O \rangle$. 
Intuitively, while partial tracing over all but small groups of qubits would lead to a nearly maximally mixed state, in practice, information appears to remain in each small group even for large systems, such as the $L = 81$ qubits case. 
Determining which types of quantum phases exhibit this property remains a non-trivial question and is one of the key challenges in assessing the applicability of our method to complex and large-scale systems. 

\begin{figure}[htp!]
    \centering
    \includegraphics[width=0.7\columnwidth]{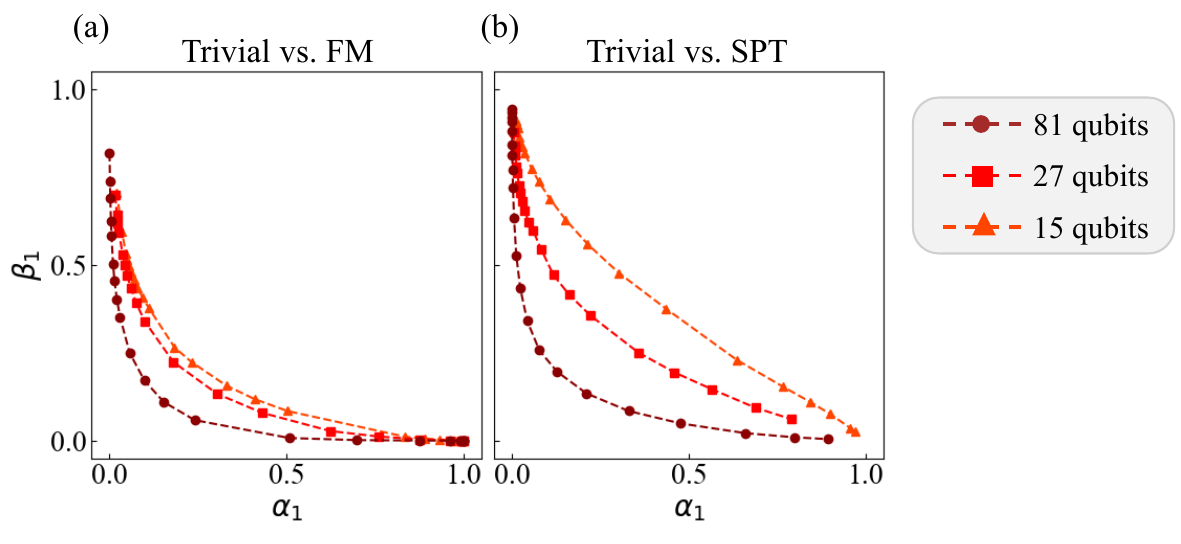}
    \caption{Type-\Romannumeral{1} and Type-\Romannumeral{2} error probabilities, $\alpha_1$ and $\beta_1$, for the test data in our method on $L = 15$, 27, 81 qubits. 
    Panel (a) corresponds to the Trivial vs. FM case, while Panel (b) represents the Trivial vs. SPT case.}
    \label{fig:Scalability}
\end{figure}

\section*{Summary and discussion}

This work has considered a scenario where quantum many-body states are available on quantum devices, either experimentally or through quantum state preparation algorithms. 
We have proposed a method to classify quantum phases by segmenting the quantum many-body system into small groups of qubits, performing partial tomography on each group, and constructing the quantum Neyman-Pearson test (the optimal strategy for distinguishing between two quantum states in quantum hypothesis testing) for each group.

We conducted a numerical analysis on quantum phase classification by performing partial tomography and utilizing the resulting tomographic data and the prior information about phases to construct measurements for classifying an unknown state. 
Partial tomography is efficient with respect to the system size; however, it has the downside that it cannot acquire global information about a quantum state, which may hinder accurate phase classification.
Nevertheless, this study numerically demonstrates that leveraging partial tomographic data effectively enables a more efficient classification of some quantum phases in many-body quantum states compared to existing methods.
In particular, we observe the following results: 
our method yields lower error probabilities than the QCNN with fewer copies of training data. 
Furthermore, without prior knowledge of the quantum phase, our approach attains lower error probabilities than the Exact QCNN and the order parameter. 
In comparison with the low-weight QCNN, our method achieves lower error probabilities with a similar number of training data and improves the required classical processing time from polynomial-log factor in system size $L$ to linear time. 
Our approach demonstrates adequate performance for large systems of up to 81 qubits. Moreover, we expect that its effectiveness will extend to a broader range of quantum phases, including those examined in Refs.~\cite{Cong2019,Huang2022manybody,Bermejo2024,Liu2023,Monaco2023}.

Our future challenges include investigating several fundamental aspects of our method as a machine learning model in greater detail. 
Specifically, we aim to explore generally achievable error probabilities for quantum phase classification (generalization), the methods of partial tomography that reduce error probabilities with fewer training data copies (trainability), and the effectiveness when applied to larger quantum many-body systems with more complex quantum phases (expressivity). 
Regarding expressivity, our method may be limited to classifying quantum phases that can be distinguished by low-body observables with respect to quantum states. 
This limitation is likely similar to the scope of applicability of the classical shadow-based method described in Refs.~\cite{Huang2022manybody,Bermejo2024}, where classification is efficient when phases are identifiable through low-body information. 
This reflects a broader challenge in QML: approaches such as classical shadows, QCNNs, and our method may be generally not equipped to efficiently capture nonlocal features, such as those arising in phases with long-range entanglement or topological order.   
In QML tasks beyond quantum phase classification, there is a need for algorithms capable of learning global observables to address nonlocal correlations.   
Relatedly, recent developments in QML have encountered obstacles, such as barren plateaus, and we suggest using quantum experiments and classical processing as one approach to overcome this barrier. 
To advance these efforts and further promote the development of QML, we will continue research in the hybrid domain of QML and QHT.

\section*{Data availability}
The datasets generated during and analysed during the current study are available in the following GitHub repository: \url{https://github.com/Tanji-A/Quantum-phase-classification-via-quantum-hypothesis-testing}.

\bibliography{ref}

@article{Cong2019,
  author = {Iris Cong and Soonwon Choi and Mikhail D. Lukin},
  title = {Quantum convolutional neural networks},
  journal = {Nature Physics},
  volume = {15},
  issue = {12},
  pages = {1273-1278},
  year = {2019}
}

@article{Huang2020,
  author = {Hsin Yuan Huang and Richard Kueng and John Preskill},
  title = {Predicting many properties of a quantum system from very few measurements},
  journal = {Nature Physics},
  volume = {16},
  issue = {10},
  pages = {1050-1057},
  year = {2020}
}

@article{Huang2022manybody,
  author = {Hsin Yuan Huang and Richard Kueng and Giacomo Torlai and Victor V. Albert and John Preskill},
  title = {Provably efficient machine learning for quantum many-body problems},
  journal = {Science},
  volume = {377},
  issue = {6613},
  pages = {eabk3333},
  year = {2022}
}

@misc{Bermejo2024,
  author = {Pablo Bermejo and Paolo Braccia and Manuel S. Rudolph and Zoë Holmes and Lukasz Cincio and M. Cerezo},
  title = {Quantum Convolutional Neural Networks are (Effectively) Classically Simulable},
  year = {2024},
  note = {Preprint at \url{http://arxiv.org/abs/2408.12739}}
}

@article{Dawson2006,
  author = {Christopher M. Dawson and Michael A. Nielsen},
  title = {The Solovay-Kitaev algorithm},
  journal = {Quantum Information and Computation},
  volume = {6},
  issue = {1},
  pages = {81-95},
  year = {2006}
}

@misc{Ge2024,
  author = {Yan Ge and Wu Wenjie and Chen Yuheng and Pan Kaisen and Lu Xudong and Zhou Zixiang and Wang Yuhan and Wang Ruocheng and Yan Junchi},
  title = {Quantum Circuit Synthesis and Compilation Optimization: Overview and Prospects},
  year = {2024},
  note = {Preprint at \url{http://arxiv.org/abs/2407.00736}}
}

@article{Verresen2017,
  author = {Ruben Verresen and Roderich Moessner and Frank Pollmann},
  title = {One-dimensional symmetry protected topological phases and their transitions},
  journal = {Physical Review B},
  volume = {96},
  issue = {16},
  pages = {165124},
  year = {2017}
}

@article{Haegeman2012,
  author = {Jutho Haegeman and David Pérez-García and Ignacio Cirac and Norbert Schuch},
  title = {Order parameter for symmetry-protected phases in one dimension},
  journal = {Physical Review Letters},
  volume = {109},
  issue = {5},
  pages = {050402},
  year = {2012}
}

@article{Pollmann2012,
  author = {Frank Pollmann and Ari M. Turner},
  title = {Detection of symmetry-protected topological phases in one dimension},
  journal = {Physical Review B},
  volume = {86},
  issue = {12},
  pages = {125441},
  year = {2012}
}

@article{Spall1998,
  author = {James C. Spall},
  title = {Implementation of the simultaneous perturbation algorithm for stochastic optimization},
  journal = {IEEE Transactions on Aerospace and Electronic Systems},
  volume = {34},
  issue = {3},
  pages = {817-823},
  year = {1998}
}

@article{Helstrom1969,
  author = {Carl W. Helstrom},
  title = {Quantum detection and estimation theory},
  journal = {Journal of Statistical Physics},
  volume = {1},
  issue = {2},
  pages = {231-252},
  year = {1969}
}

@misc{Tej2018,
  author = {J. Prabhu Tej and Syed Raunaq Ahmed and A. R. Usha Devi and A. K. Rajagopal},
  title = {Quantum hypothesis testing and state discrimination},
  year = {2018},
  note = {Preprint at \url{http://arxiv.org/abs/1803.04944}}
}

@article{Nagaoka2007,
  author = {Hiroshi Nagaoka and Masahito Hayashi},
  title = {An information-spectrum approach to classical and quantum hypothesis testing for simple hypotheses},
  journal = {IEEE Transactions on Information Theory},
  volume = {53},
  issue = {2},
  pages = {534-549},
  year = {2007}
}

@article{Keski2021,
  author = {Jan de Boer and Victor Godet and Jani Kastikainen and Esko Keski-Vakkuri},
  title = {Quantum hypothesis testing in many-body systems},
  journal = {SciPost Physics Core},
  volume = {4},
  issue = {2},
  pages = {019},
  year = {2021}
}

@inbook{Hayashi2006,
  author = {Masahito Hayashi},
  title = {Quantum Hypothesis Testing and Discrimination of Quantum States},
  booktitle = {Quantum Information: An Introduction},
  pages = {69-91},
  publisher = {Springer Berlin Heidelberg},
  year = {2006}
}

@book{Sachdev2011,
  author = {Subir Sachdev},
  title = {Quantum Phase Transitions},
  edition = {Second edition.},
  publisher = {Cambridge University Press},
  year = {2011}
}

@article{Vojta2003,
  author = {Matthias Vojta},
  title = {Quantum phase transitions},
  journal = {Reports on Progress in Physics},
  volume = {66},
  issue = {12},
  pages = {2069},
  year = {2003}
}

@inbook{Belitz2000,
  author = {D. Belitz and T. R. Kirkpatrick},
  title = {Quantum Phase Transitions},
  booktitle = {Dynamics: Models and Kinetic Methods for Non-equilibrium Many Body Systems},
  editor = {John Karkheck},
  pages = {399-424},
  publisher = {Springer Netherlands},
  year = {2000}
}

@article{Shopova2003,
  author = {Diana V. Shopova and Dimo I. Uzunov},
  title = {Some basic aspects of quantum phase transitions},
  journal = {Physics Reports},
  volume = {379},
  issue = {1},
  pages = {1-67},
  year = {2003}
}

@article{Sondhi1997,
  author = {S. L. Sondhi and S. M. Girvin and J. P. Carini and D. Shahar},
  title = {Continuous quantum phase transitions},
  journal = {Reviews of Modern Physics},
  volume = {69},
  issue = {1},
  pages = {315-333},
  year = {1997}
}

@article{Haldane2017,
  author = {F. Duncan M. Haldane},
  title = {Nobel lecture: Topological quantum matter},
  journal = {Reviews of Modern Physics},
  volume = {89},
  issue = {4},
  pages = {040502},
  year = {2017}
}

@article{Qi2011,
  author = {Xiao Liang Qi and Shou Cheng Zhang},
  title = {Topological insulators and superconductors},
  journal = {Reviews of Modern Physics},
  volume = {83},
  issue = {4},
  pages = {1057-1110},
  year = {2011}
}

@article{Shi2015,
  author = {Qian Qian Shi and Huan Qiang Zhou and Murray T. Batchelor},
  title = {Universal Order Parameters and Quantum Phase Transitions: A Finite-Size Approach},
  journal = {Scientific Reports},
  volume = {5},
  issue = {1},
  pages = {7673},
  year = {2015}
}

@article{Pesah2021,
  author = {Arthur Pesah and M. Cerezo and Samson Wang and Tyler Volkoff and Andrew T. Sornborger and Patrick J. Coles},
  title = {Absence of Barren Plateaus in Quantum Convolutional Neural Networks},
  journal = {Physical Review X},
  volume = {11},
  issue = {4},
  pages = {041011},
  year = {2021}
}

@article{Caro2022,
  author = {Matthias C. Caro and Hsin Yuan Huang and M. Cerezo and Kunal Sharma and Andrew Sornborger and Lukasz Cincio and Patrick J. Coles},
  title = {Generalization in quantum machine learning from few training data},
  journal = {Nature Communications},
  volume = {13},
  issue = {1},
  pages = {4919},
  year = {2022}
}

@article{Herrmann2022,
  author = {Johannes Herrmann and Sergi Masot Llima and Ants Remm and Petr Zapletal and Nathan A. McMahon and Colin Scarato and François Swiadek and Christian Kraglund Andersen and Christoph Hellings and Sebastian Krinner and Nathan Lacroix and Stefania Lazar and Michael Kerschbaum and Dante Colao Zanuz and Graham J. Norris and Michael J. Hartmann and Andreas Wallraff and Christopher Eichler},
  title = {Realizing quantum convolutional neural networks on a superconducting quantum processor to recognize quantum phases},
  journal = {Nature Communications},
  volume = {13},
  issue = {1},
  pages = {4144},
  year = {2022}
}

@article{Monaco2023,
  author = {Saverio Monaco and Oriel Kiss and Antonio Mandarino and Sofia Vallecorsa and Michele Grossi},
  title = {Quantum phase detection generalization from marginal quantum neural network models},
  journal = {Physical Review B},
  volume = {107},
  issue = {8},
  pages = {L081105},
  year = {2023}
}

@article{Liu2023,
  author = {Yu Jie Liu and Adam Smith and Michael Knap and Frank Pollmann},
  title = {Model-Independent Learning of Quantum Phases of Matter with Quantum Convolutional Neural Networks},
  journal = {Physical Review Letters},
  volume = {130},
  issue = {22},
  pages = {220603},
  year = {2023}
}

@article{Lecun2015,
  author = {Yann Lecun and Yoshua Bengio and Geoffrey Hinton},
  title = {Deep learning},
  journal = {Nature},
  volume = {521},
  issue = {7553},
  pages = {436-444},
  year = {2015}
}

@inbook{LeCun1998,
  author = {Y LeCun and Y Bengio},
  title = {Convolutional networks for images, speech, and time series},
  booktitle = {The handbook of brain theory and neural networks},
  pages = {255-258},
  publisher = {MIT Press},
  year = {1998}
}

@article{Francis1961,
  author = {J. G. F. Francis},
  title = {The QR Transformation A Unitary Analogue to the LR Transformation--Part 1},
  journal = {The Computer Journal},
  volume = {4},
  issue = {3},
  pages = {265-271},
  year = {1961}
}

@article{Francis1962,
  author = {J. G. F. Francis},
  title = {The QR Transformation--Part 2},
  journal = {The Computer Journal},
  volume = {4},
  issue = {4},
  pages = {332-345},
  year = {1962}
}

@book{Fisher1935,
  author = {Ronald Aylmer Fisher},
  title = {The design of experiments},
  publisher = {Oliver \& Boyd},
  year = {1935}
}

@article{Neyman1933,
  author = {Jerzy Neyman and Egon Sharpe Pearson},
  title = {IX. On the problem of the most efficient tests of statistical hypotheses},
  journal = {Philosophical Transactions of the Royal Society of London. Series A, Containing Papers of a Mathematical or Physical Character},
  volume = {231},
  issue = {694-706},
  pages = {289-337},
  year = {1933}
}

@book{Lehmann2006,
  author = {Erich L. Lehmann and Joseph P. Romano},
  title = {Testing Statistical Hypotheses},
  publisher = {Springer Science \& Business Media},
  year = {2006}
}

@article{Karlin1956,
  author = {Samuel Karlin and Herman Rubin},
  title = {The Theory of Decision Procedures for Distributions with Monotone Likelihood Ratio},
  journal = {The Annals of Mathematical Statistics},
  volume = {27},
  issue = {2},
  pages = {272-299},
  year = {1956}
}

@article{Vidal2007,
  author = {G. Vidal},
  title = {Classical simulation of infinite-size quantum lattice systems in one spatial dimension},
  journal = {Physical Review Letters},
  volume = {98},
  issue = {7},
  pages = {070201},
  year = {2007}
}

@article{Schollwöck2005,
  author = {U. Schollwöck},
  title = {The density-matrix renormalization group},
  journal = {Reviews of Modern Physics},
  volume = {77},
  issue = {1},
  pages = {259-315},
  year = {2005}
}

@article{Dusuel2011,
  author = {Sébastien Dusuel and Michael Kamfor and Román Orús and Kai Phillip Schmidt and Julien Vidal},
  title = {Robustness of a perturbed topological phase},
  journal = {Physical Review Letters},
  volume = {106},
  issue = {10},
  pages = {107203},
  year = {2011}
}

@article{Vidal2008,
  author = {G. Vidal},
  title = {Class of quantum Many-Body states that can be efficiently simulated},
  journal = {Physical Review Letters},
  volume = {101},
  issue = {11},
  pages = {110501},
  year = {2008}
}

@article{Zhao2021,
  author = {Andrew Zhao and Nicholas C. Rubin and Akimasa Miyake},
  title = {Fermionic Partial Tomography via Classical Shadows},
  journal = {Physical Review Letters},
  volume = {127},
  issue = {11},
  pages = {110504},
  year = {2021}
}

@article{Bonet-Monroig2020,
  author = {Xavier Bonet-Monroig and Ryan Babbush and Thomas E. O'Brien},
  title = {Nearly Optimal Measurement Scheduling for Partial Tomography of Quantum States},
  journal = {Physical Review X},
  volume = {10},
  issue = {3},
  pages = {031064},
  year = {2020}
}

@article{Tong2018,
  author = {Xin Tong and Yang Feng and Jingyi Jessica Li},
  title = {Neyman-Pearson classification algorithms and NP receiver operating characteristics},
  journal = {Science Advances},
  volume = {4},
  issue = {2},
  pages = {eaao1659},
  year = {2018}
}

@article{Scott2005,
  author = {Clayton Scott and Robert Nowak},
  title = {A Neyman-Pearson approach to statistical learning},
  journal = {IEEE Transactions on Information Theory},
  volume = {51},
  issue = {11},
  pages = {3806-3819},
  year = {2005}
}

@article{Lake2025,
  author = {Ethan Lake and Shankar Balasubramanian and Soonwon Choi},
  title = {Exact Quantum Algorithms for Quantum Phase Recognition: Renormalization Group and Error Correction},
  journal = {PRX Quantum},
  volume = {6},
  issue = {1},
  pages = {10350},
  year = {2025}
}

@article{Hiai1991,
  author = {Fumio Hiai and Dénes Petz},
  title = {The proper formula for relative entropy and its asymptotics in quantum probability},
  journal = {Communications in Mathematical Physics},
  volume = {143},
  issue = {1},
  pages = {99-114},
  year = {1991}
}

@article{Ogawa2000,
  author = {Tomohiro Ogawa and Hiroshi Nagaoka},
  title = {Strong converse and Stein's lemma in quantum hypothesis testing},
  journal = {IEEE Transactions on Information Theory},
  volume = {46},
  issue = {7},
  pages = {2428-2433},
  year = {2000}
}

@article{Ogawa2004,
  author = {Tomohiro Ogawa and Masahito Hayashi},
  title = {On error exponents in quantum hypothesis testing},
  journal = {IEEE Transactions on Information Theory},
  volume = {50},
  issue = {6},
  pages = {1368-1372},
  year = {2004}
}

@article{Peruzzo2014,
  author = {Alberto Peruzzo and Jarrod McClean and Peter Shadbolt and Man Hong Yung and Xiao Qi Zhou and Peter J. Love and Alán Aspuru-Guzik and Jeremy L. O'Brien},
  title = {A variational eigenvalue solver on a photonic quantum processor},
  journal = {Nature Communications},
  volume = {5},
  issue = {1},
  pages = {4213},
  year = {2014}
}

@misc{Farhi2014,
  author = {Edward Farhi and Jeffrey Goldstone and Sam Gutmann},
  title = {A Quantum Approximate Optimization Algorithm},
  year = {2014},
  note = {Preprint at \url{http://arxiv.org/abs/1411.4028}}
}

@article{Xiao2023,
  author = {Tailong Xiao and Xinliang Zhai and Xiaoyan Wu and Jianping Fan and Guihua Zeng},
  title = {Practical advantage of quantum machine learning in ghost imaging},
  journal = {Communications Physics},
  volume = {6},
  issue = {1},
  pages = {171},
  year = {2023}
}

@article{Xu2025,
  author = {Hang Xu and Tailong Xiao and Jingzheng Huang and Ming He and Jianping Fan and Guihua Zeng},
  title = {Toward Heisenberg Limit without Critical Slowing Down via Quantum Reinforcement Learning},
  journal = {Physical Review Letters},
  volume = {134},
  issue = {12},
  pages = {120803},
  year = {2025}
}

@article{Banchi2021,
  author = {Leonardo Banchi and Jason Pereira and Stefano Pirandola},
  title = {Generalization in Quantum Machine Learning: A Quantum Information Standpoint},
  journal = {PRX Quantum},
  volume = {2},
  issue = {4},
  pages = {040321},
  year = {2021}
}

@article{Banchi2024,
  author = {Leonardo Banchi and Jason Luke Pereira and Sharu Theresa Jose and Osvaldo Simeone},
  title = {Statistical Complexity of Quantum Learning},
  journal = {Advanced Quantum Technologies},
  pages = {2300311},
  year = {2024}
}

@misc{Lloyd2020,
  author = {Seth Lloyd and Maria Schuld and Aroosa Ijaz and Josh Izaac and Nathan Killoran},
  title = {Quantum embeddings for machine learning},
  year = {2020},
  note = {Preprint at \url{http://arxiv.org/abs/2001.03622}}
}

@article{Blank2020,
  author = {Carsten Blank and Daniel K. Park and June Koo Kevin Rhee and Francesco Petruccione},
  title = {Quantum classifier with tailored quantum kernel},
  journal = {npj Quantum Information},
  volume = {6},
  issue = {1},
  pages = {41},
  year = {2020}
}

@article{Huang2022,
  author = {Hsin Yuan Huang and Michael Broughton and Jordan Cotler and Sitan Chen and Jerry Li and Masoud Mohseni and Hartmut Neven and Ryan Babbush and Richard Kueng and John Preskill and Jarrod R. McClean},
  title = {Quantum advantage in learning from experiments},
  journal = {Science},
  volume = {376},
  issue = {6598},
  pages = {1182-1186},
  year = {2022}
}

@book{Beck2015,
  author = {Matthias Beck and Sinai Robins},
  title = {Computing the Continuous Discretely},
  edition = {2},
  pages = {285},
  publisher = {Springer New York},
  year = {2015}
}

@article{Hastings2006,
  author = {Matthew B. Hastings and Tohru Koma},
  title = {Spectral gap and exponential decay of correlations},
  journal = {Communications in Mathematical Physics},
  volume = {265},
  issue = {3},
  pages = {781-804},
  year = {2006}
}

@article{Nachtergaele2006,
  author = {Bruno Nachtergaele and Robert Sims},
  title = {Lieb-Robinson bounds and the exponential clustering theorem},
  journal = {Communications in Mathematical Physics},
  volume = {265},
  issue = {1},
  pages = {119-130},
  year = {2006}
}

@article{Biamonte2017,
  author = {Jacob Biamonte and Peter Wittek and Nicola Pancotti and Patrick Rebentrost and Nathan Wiebe and Seth Lloyd},
  title = {Quantum machine learning},
  journal = {Nature},
  volume = {549},
  issue = {7671},
  pages = {195-202},
  year = {2017}
}

@article{Huang2021,
  author = {Hsin Yuan Huang and Richard Kueng and John Preskill},
  title = {Information-Theoretic Bounds on Quantum Advantage in Machine Learning},
  journal = {Physical Review Letters},
  volume = {126},
  issue = {19},
  pages = {190505},
  year = {2021}
}

@article{Cao2025,
  author = {Chenfeng Cao and Filippo Maria Gambetta and Ashley Montanaro and Raul A Santos},
  title = {Unveiling quantum phase transitions from traps in variational quantum algorithms},
  journal = {npj Quantum Information},
  volume = {11},
  issue = {1},
  pages = {93},
  year = {2025}
}

@article{Liu2025,
  author = {Zheng Hao Liu and Romain Brunel and Emil E B Østergaard and Oscar Cordero and Senrui Chen and Yat Wong and Jens A H Nielsen and Axel B Bregnsbo and Sisi Zhou and Hsin Yuan Huang and Changhun Oh and Liang Jiang and John Preskill and Jonas S Neergaard-Nielsen and Ulrik L Andersen},
  title = {Quantum learning advantage on a scalable photonic platform},
  journal = {Science},
  volume = {389},
  issue = {6767},
  pages = {1332-1335},
  year = {2025}
}

@article{Liggett1997,
  author = {Thomas M. Liggett},
  title = {Ultra logconcave sequences and negative dependence},
  journal = {Journal of Combinatorial Theory. Series A},
  volume = {79},
  issue = {2},
  pages = {315-325},
  year = {1997}
}

@article{Johnson2013,
  author = {Oliver Johnson and Ioannis Kontoyiannis and Mokshay Madiman},
  title = {Log-concavity, ultra-log-concavity, and a maximum entropy property of discrete compound Poisson measures},
  journal = {Discrete Applied Mathematics},
  volume = {161},
  issue = {9},
  pages = {1232-1250},
  year = {2013}
}

@article{Sander2025,
  author = {Leon C Sander and Nathan A McMahon and Petr Zapletal and Michael J Hartmann},
  title = {Quantum convolutional neural network for phase recognition in two dimensions},
  journal = {Physical Review Research},
  volume = {7},
  issue = {4},
  pages = {L042032},
  year = {2025}
}

@article{Angrisani2025,
  author = {Armando Angrisani and Alexander Schmidhuber and Manuel S. Rudolph and M. Cerezo and Zoë Holmes and Hsin Yuan Huang},
  title = {Classically Estimating Observables of Noiseless Quantum Circuits},
  journal = {Physical Review Letters},
  volume = {135},
  issue = {17},
  pages = {170602},
  year = {2025}
}

@article{Cheng2025,
  author = {Hao Chung Cheng and Nilanjana Datta and Nana Liu and Theshani Nuradha and Robert Salzmann and Mark M. Wilde},
  title = {An invitation to the sample complexity of quantum hypothesis testing},
  journal = {npj Quantum Information},
  volume = {11},
  issue = {1},
  pages = {94},
  year = {2025}
}

@article{Cerezo2025,
  author = {M. Cerezo and Martin Larocca and Diego García-Martín and N. L. Diaz and Paolo Braccia and Enrico Fontana and Manuel S. Rudolph and Pablo Bermejo and Aroosa Ijaz and Supanut Thanasilp and Eric R. Anschuetz and Zoë Holmes},
  title = {Does provable absence of barren plateaus imply classical simulability?},
  journal = {Nature Communications},
  volume = {16},
  issue = {1},
  pages = {7907},
  year = {2025}
}

@misc{Spall1998overview,
   author = {James C. Spall},
   title = {An overview of the simultaneous perturbation method for efficient optimization},
   note = {\underline{Johns Hopkins APL Technical Digest (Applied Physics Laboratory)} \textbf{19}, 482--492; \url{https://secwww.jhuapl.edu/techdigest/Content/techdigest/pdf/V19-N04/19-04-Spall.pdf} (1998)}
}



\section*{Acknowledgements}

This work is supported by MEXT Quantum Leap Flagship Program Grants No. JPMXS0118067285 and No. JPMXS0120319794 and JST Grant No. JPMJPF2221.


\appendix
\section{Settings of numerical simulations}
We first describe how phase classification was performed using the methods employed in the numerical results presented in the main text.

\paragraph*{Order parameter.}
The classification using the order parameter, detailed in Section \ref{app:order and cNeyman}, involves applying projective measurements defined by the observables $O_{\text{FM}}$ for the Trivial vs. FM case and $O_{\text{SPT}}$ for the Trivial vs. SPT case in the main text. 
These measurements are performed on the test data, and are utilized for testing whether the expectation values of the order parameter are zero or not, using the Bayesian test for the Trivial vs. FM case and the classical Neyman-Pearson test for the Trivial vs. SPT case. 

\paragraph*{Exact QCNN.}
The Exact QCNNs used in our numerical simulation are shown in Fig.~\ref{fig:Exact QCNN circuits}. The Exact QCNN for the FM phase was developed in this study, whereas the one for the SPT phase was taken from Ref.~\cite{Cong2019}. 
Both were constructed based on theoretical insights, such as order parameters and the renormalization group.
For classification using the Exact QCNN, also detailed in Section \ref{app:QCNN and cNeyman}, the test data are input into the circuit shown in Fig.~\ref{fig:Exact QCNN circuits}(a) for the Trivial vs. FM case and in Fig.~\ref{fig:Exact QCNN circuits}(b) for the Trivial vs. SPT case. 
The output expectation values are then evaluated using the classical Neyman-Pearson test to test whether they exceed $0.5$.
Test data with output values above $0.5$ are classified as label $y^{(i)}=1$, and those not exceeding $0.5$ as label $y^{(i)}=0$.
\begin{figure}[htp!]
    \centering
    \includegraphics[width=0.5\columnwidth]{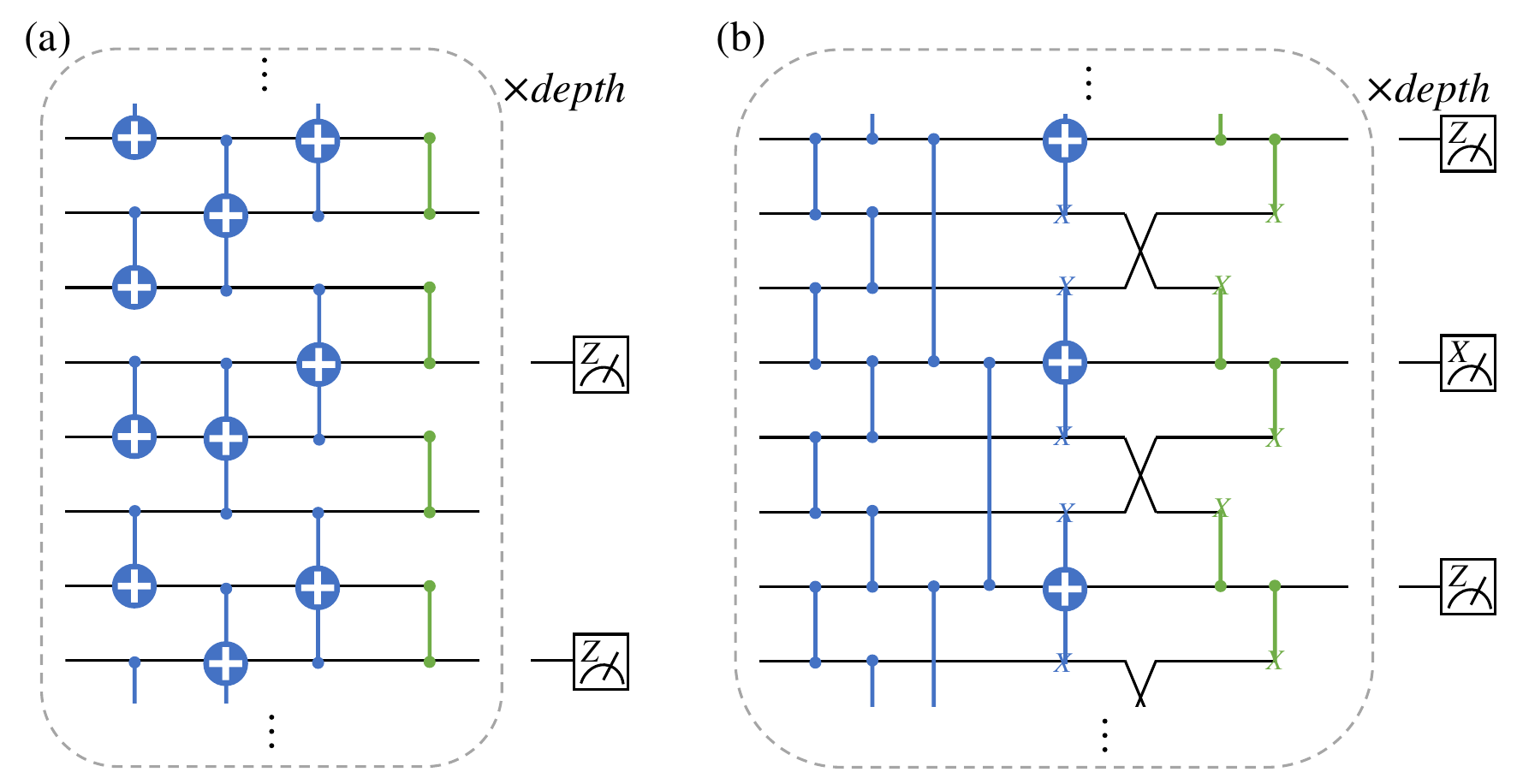}
    \caption{Exact QCNN circuits for (a) the FM and (b) SPT phases. 
    The details of (a) are provided in Section \ref{app:QCNN and cNeyman}, while (b) is proposed in Ref.~\cite{Cong2019}. 
    The circuit consists of convolutional layers (blue) and pooling layers (green, reduce the qubits), repeated for a specified depth, followed by Pauli measurements on the remaining qubits.}
    \label{fig:Exact QCNN circuits}
\end{figure}

\paragraph*{QCNN.}
The QCNN is trained using the ansatz shown in Fig.~\ref{fig:QCNN circuit}, initialized randomly. 
The training data are input, and the circuit parameters $\boldsymbol{\theta}$ are updated over epochs using the SPSA optimizer to minimize the MSE loss 
\begin{equation}
\text{MSE}(\boldsymbol{\theta}) = \frac{1}{N_{\text{train}}}\sum^{N_{\text{train}}}_{i=1}\qty(f(\boldsymbol{\theta},\rho^{(i)}) - y^{(i)})^2,
\end{equation}
where $f(\boldsymbol{\theta}, \rho^{(i)})$ denotes the output expectation value when the training data $\rho^{(i)}$ is input into the QCNN parameterized by $\boldsymbol{\theta}$. 
This training is performed for both the Trivial vs. FM and the Trivial vs. SPT cases.
The trained QCNN classifies the test data using the classical Neyman-Pearson test, following the same procedure as described for the Exact QCNN: output values above $0.5$ are assigned label $y^{(i)}=1$, and those not exceeding $0.5$ are assigned label $y^{(i)}=0$.
\begin{figure}[htp!]
    \centering
     \includegraphics[width=0.4\columnwidth]{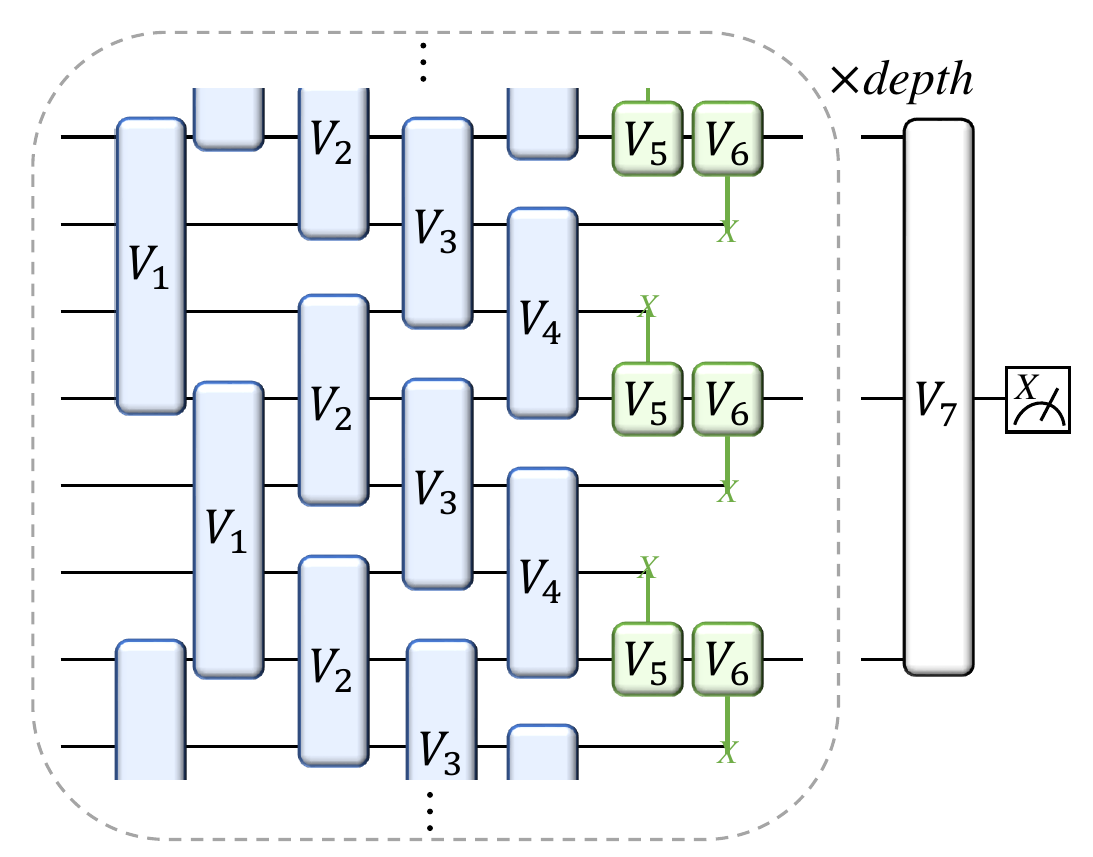}
    \caption{QCNN ansatz proposed in Ref.~\cite{Cong2019}.
    The circuit consists of convolutional layers (blue) and pooling layers (green, reduce the qubits), repeated for a specified depth, followed by fully connected layer (black) and Pauli measurements on the remaining qubits.
    The unitaries are parameterized as $V = \mathrm{exp}(-i\sum_j \theta_j\Lambda_j)$, where $\{\Lambda_j\}$ are generalized Gell-Mann matrices and $\{\theta_j\}$ are real parameters.}
    \label{fig:QCNN circuit}
\end{figure}

\paragraph*{low-weight QCNN.}
The low-weight QCNN uses classical shadow snapshots of the training data obtained through quantum experiments and is trained by classical simulation using only the low-weight components of the Heisenberg-evolved observables, i.e., Pauli basis elements with a small number of non-identity qubits.
Since the snapshots are collected once and can be reused for any number of epochs, the low-weight QCNN is trained for 300 epochs, where the validation loss reaches a stable level.
The low-weight QCNN limit the weight of Pauli basis elements up to three, employs random Pauli measurements for classical shadow, and the other settings such as the ansatz and loss function follow the same as the QCNN.

\section{Majority vote}\label{sec:majority vote}
After performing the partitioned quantum Neyman--Pearson tests on each subsystem, the individual subsystem predictions must be combined by a classical post-processing step. 
In our numerical simulations we use a simple majority vote. 
In this section we justify this choice by showing that, under an exponential clustering assumption, majority vote reduces the prediction variance. 
We also discuss alternative classical aggregation methods.

The following assumption is known to hold, e.g., for unique ground states of local and gapped Hamiltonians~\cite{Hastings2006,Nachtergaele2006}. 
It is a typical property in simple models but may fail at critical points where phase transitions occur.
Consequently, while it supports our design choice, it does not fully cover all regimes probed in our simulations.


\begin{assumption}[Exponential clustering~\cite{Hastings2006,Nachtergaele2006}]
\label{asm:clustering}
Let $\rho$ be the density operator of the entire system.
For disjoint subsystems $\mathcal{H}_X,\mathcal{H}_Y$ supported on site-sets $X,Y$, define the reduced density matrices by $\rho_X = Tr_{\lnot X} ( \rho )$, $\rho_Y = Tr_{\lnot Y} ( \rho )$, and $\rho_{XY} = Tr_{\lnot (X \cup Y)} ( \rho )$, where $\lnot X$ denotes the complement of $X$ in the entire system.
Let $r_{XY} = \min_{x\in X,\,y\in Y}\mathrm{dist}(x,y)$ be the graph-theoretic distance between $X$ and $Y$, and let $\xi>0$ denote the correlation length. For any local observables $A_X$ on $\mathcal{H}_X$ and $B_Y$ on $\mathcal{H}_Y$ with $\|A_X\|\le 1$ and $\|B_Y\|\le 1$, correlations decay exponentially with distance,
\begin{equation}
\bigl|\mathrm{Tr}\!\bigl[(A_X\otimes B_Y)\,(\rho_{XY}-\rho_X\otimes\rho_Y)\bigr]\bigr|
\;\le\; \mathrm{Const.}\times e^{-r_{XY}/\xi}\,.
\end{equation}
\end{assumption}

Treating each partitioned quantum Neyman--Pearson test as a weak classifier, the overall procedure can be viewed as a form of model ensemble learning.
In classical machine learning, a simple majority vote is well known to be a natural ensemble method for classification problems, and its effectiveness is governed by the correlations among weak classifiers. 
Under Asm.~\ref{asm:clustering}, the covariance between the $i$- and $j$-th subsystems decays exponentially with their separation $r_{ij}$ and the inverse correlation length $1/\xi$. 
As a result, when aggregating their predictions by majority vote, the estimator exhibits variance reduction analogous to model ensembling methods.
In particular, when $\xi$ is sufficiently small, the variance scales as $\mathcal{O}(1/M)$ with $M=L/k$ the number of subsystems.

\begin{theorem}[Variance reduction by the majority vote on a $d$-dimensional model]
\label{thm:majority vote variance}
Let the $M=L/k$ subsystems be arranged on a $d$-dimensional lattice, and let
$\{S_j^{(0)},S_j^{(1)}\}$ be a two-outcome POVM on subsystem $j\in\{1,\dots,M\}$.
For each $j$, define a random variable $Y_j\in\{0,1\}$ whose marginal distribution is Bernoulli with $\Pr(Y_j=1)=\mathrm{Tr}(S_j^{(1)}\rho_j)$, where $\rho_j$ is the RDM on subsystem $j$.  
Let the average estimator be
$\bar{Y}=\frac{1}{M}\sum_{j=1}^{M}Y_j$, which serves as the majority vote estimator when thresholded at 0.5.
Under Asm.~\ref{asm:clustering}, 
\begin{equation}
\mathrm{Var}(\bar{Y})
\;\le\; \frac{\sigma^2}{M}\left[
1+\frac{C}{\sigma^2}\left\{\left(\frac{1+e^{-1/\xi}}{1-e^{-1/\xi}}\right)^{\!d}-1\right\}
\right],
\label{eq:mv bound}
\end{equation}
where $\sigma^2=\max_j\mathrm{Var}(Y_j)$ and $C$ is the maximum constant appearing in Asm.~\ref{asm:clustering} taken over all subsystems.
In particular, for small correlation length $\xi$, the dominant scaling is
$\mathrm{Var}(\bar{Y})\lessapprox \sigma^2/M$.
\end{theorem}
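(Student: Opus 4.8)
The plan is to start from the exact variance decomposition
\[
\mathrm{Var}(\bar Y)=\frac{1}{M^2}\sum_{j=1}^{M}\mathrm{Var}(Y_j)+\frac{1}{M^2}\sum_{i\ne j}\mathrm{Cov}(Y_i,Y_j),
\]
bound the diagonal part trivially by $\sigma^2/M$, and control the off-diagonal part using Assumption~\ref{asm:clustering}. The key observation is that $Y_j$ is the outcome of the two-outcome POVM $\{S_j^{(0)},S_j^{(1)}\}$, so $\mathrm{Cov}(Y_i,Y_j)=\mathrm{Tr}\!\bigl[(S_i^{(1)}\otimes S_j^{(1)})(\rho_{ij}-\rho_i\otimes\rho_j)\bigr]$, which is exactly of the form appearing in the clustering bound with $A_X=S_i^{(1)}$, $B_Y=S_j^{(1)}$ (both POVM elements satisfy $\|\cdot\|\le 1$). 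Hence $|\mathrm{Cov}(Y_i,Y_j)|\le C\,e^{-r_{ij}/\xi}$, where $r_{ij}$ is the lattice distance between subsystems $i$ and $j$ and $C$ is the worst-case constant over all pairs.

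Next I would sum the covariance bound over the lattice. Fixing a subsystem $i$ and summing over all $j$ (including $j=i$, then subtracting the $j=i$ term), one gets
\[
\sum_{j=1}^{M}e^{-r_{ij}/\xi}\;\le\;\prod_{\alpha=1}^{d}\Bigl(\sum_{m\in\mathbb{Z}}e^{-|m|/\xi}\Bigr)=\Bigl(\frac{1+e^{-1/\xi}}{1-e^{-1/\xi}}\Bigr)^{\!d},
\]
using that on $\mathbb{Z}^d$ with the $\ell^1$ (or any equivalent) graph distance the exponential factorizes over coordinates and $\sum_{m\in\mathbb{Z}}e^{-|m|/\xi}=\frac{1+e^{-1/\xi}}{1-e^{-1/\xi}}$; extending each coordinate sum to all of $\mathbb{Z}$ only enlarges it, so the bound is uniform in $i$ and in $M$. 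Subtracting the $j=i$ contribution (which equals $1$) gives $\sum_{j\ne i}e^{-r_{ij}/\xi}\le\bigl(\frac{1+e^{-1/\xi}}{1-e^{-1/\xi}}\bigr)^{\!d}-1$. Therefore
\[
\frac{1}{M^2}\sum_{i\ne j}|\mathrm{Cov}(Y_i,Y_j)|\;\le\;\frac{C}{M}\left[\Bigl(\frac{1+e^{-1/\xi}}{1-e^{-1/\xi}}\Bigr)^{\!d}-1\right],
\]
and combining with the diagonal bound $\frac{1}{M^2}\sum_j\mathrm{Var}(Y_j)\le\sigma^2/M$ yields Eq.~\eqref{eq:mv bound} after factoring out $\sigma^2/M$.

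Finally, for the small-$\xi$ asymptotics I would note that $e^{-1/\xi}\to 0$ as $\xi\to 0$, so $\bigl(\frac{1+e^{-1/\xi}}{1-e^{-1/\xi}}\bigr)^{d}=1+O(e^{-1/\xi})$, making the bracketed correction term vanish exponentially; hence the bound collapses to $\mathrm{Var}(\bar Y)\lessapprox\sigma^2/M$, the same $\mathcal{O}(1/M)$ scaling one would get for independent classifiers. The main obstacle I anticipate is not any single inequality but rather stating the geometric hypotheses precisely enough that the coordinate-wise factorization of $\sum_j e^{-r_{ij}/\xi}$ is rigorous: one must fix the metric on the subsystem lattice (e.g.\ take $r_{ij}$ to be the $\ell^1$ graph distance between the blocks, or absorb the choice of norm into the constant $C$) and handle boundary effects by the monotone extension to the full lattice $\mathbb{Z}^d$. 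Everything else is a routine second-moment computation once the clustering bound is plugged in.
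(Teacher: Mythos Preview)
Your proposal is correct and follows essentially the same route as the paper: expand $\mathrm{Var}(\bar Y)$, identify $\mathrm{Cov}(Y_i,Y_j)=\mathrm{Tr}[(S_i^{(1)}\otimes S_j^{(1)})(\rho_{ij}-\rho_i\otimes\rho_j)]$ and bound it by $C e^{-r_{ij}/\xi}$ via the clustering assumption, and then control the lattice sum to obtain the closed form $\bigl(\tfrac{1+e^{-1/\xi}}{1-e^{-1/\xi}}\bigr)^d-1$. The only difference is cosmetic: the paper invokes the generating-function identity $\sum_{r\ge 1}N_d(r)q^r=\bigl(\tfrac{1+q}{1-q}\bigr)^d-1$ (citing a combinatorics reference), whereas you derive the same quantity directly by factorizing $\sum_{v\in\mathbb{Z}^d}e^{-\|v\|_1/\xi}$ over coordinates and subtracting the origin---these are two presentations of the same computation, and your version is self-contained.
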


\begin{proof}
Write $\sigma_j^2=\mathrm{Var}(Y_j)$ and
$\mathrm{cov}_{i,j}=\mathrm{Cov}(Y_i,Y_j)$.  By Asm.~\ref{asm:clustering} and
$\|S_j^{(1)}\|\le 1$,
\begin{equation}
|\mathrm{cov}_{i,j}|
= \bigl|\mathbb{E}[Y_iY_j]-\mathbb{E}[Y_i]\mathbb{E}[Y_j]\bigr|
= \bigl|\mathrm{Tr}\!\bigl[(S_i^{(1)}\!\otimes S_j^{(1)})(\rho_{ij}-\rho_i\otimes\rho_j)\bigr]\bigr|
\le \mathrm{Const.}\times e^{-r_{ij}/\xi}\,,
\end{equation}
and thus
\begin{equation}
\mathrm{Var}(\bar{Y})
=\frac{1}{M^2}\Bigl(\sum_{j=1}^{M}\sigma_j^2
+2\sum_{1\le i<j\le M}\mathrm{cov}_{i,j}\Bigr)
\;\le\; \frac{\sigma^2}{M}+\frac{2C}{M^2}\sum_{1\le i<j\le M}e^{-r_{ij}/\xi}.
\end{equation}
For the $d$-dimensional lattice with $L_1$ distance, the generating function identity
\begin{equation}
\sum_{r=1}^{\infty} N_d(r)\, q^r
= \left(\frac{1+q}{1-q}\right)^{d} - 1,
\end{equation}
holds for any scalar $q$ with $|q|<1$, where $N_d(r)$ denotes the number of sites at distance $r$ from a given site (see, e.g., Thm.~2.7 of ~\cite{Beck2015}, from which it follows by a straightforward calculation).
Using this,
\begin{equation}
\frac{1}{M^2}\sum_{1\le i<j\le M}e^{-r_{ij}/\xi}
\;\le\; \frac{1}{M^2}\sum_{r=1}^{\infty}e^{-r/\xi}\frac{1}{2}\sum_{i=1}^M N_d(r) = \frac{1}{2M}\sum_{r=1}^{\infty}N_d(r)e^{-r/\xi},
\end{equation}
where the factor $1/2$ accounts for the fact that each unordered pair $(i,j)$ is counted twice when summing over all $i$.
Substituting this bound completes the proof of Eq.~\eqref{eq:mv bound}.  
\end{proof}
Although majority vote is not necessarily optimal in all settings, under the exponential clustering assumption it behaves as a standard and justified ensemble method.
Finally, we note that the majority vote is obtained simply by thresholding the average estimator $\bar{Y}$ at 0.5.
Thus, the variance reduction of $\bar{Y}$ directly justifies the stability of the majority vote.
Moreover, in the extreme uncorrelated case (i.e., $\forall i,j,i\neq j,\ \mathrm{cov}_{i,j}=0$) for odd $M$, the majority vote estimator $X\sim\mathrm{Bernoulli}(\Pr\{\bar{Y}>0.5\})$ satisfies $\bigl|\mathbb{E}[X]-0.5\bigr| \;\ge\; \bigl|\mathbb{E}[\bar{Y}]-0.5\bigr|$, which follows from the ultra log-concavity of Poisson binomial distributions~\cite{Liggett1997,Johnson2013}.
This provides a futher rationale for adopting the majority rule.
We note that the above result concerns the variance of the estimator given fixed system and subsystem sizes $L$ and $k$.
If one instead increases $M$ by decreasing $k$, each subsystem maintains less information about the entire state, which may affect classification accuracy.
This trade-off is separate from the variance reduction analyzed in the theorem.

Beyond the uniform majority vote, we may also consider a weighted variant
\begin{equation}
\bar{Y}_{\boldsymbol{w}}
=
\frac{1}{\|\boldsymbol{w}\|_1}\sum_{j=1}^{M} w_j\,Y_j.
\end{equation}
Such weights may be chosen, for example, to reduce finite-size effects by assigning smaller weights to subsystems near the boundary and larger weights to those in the bulk.
Alternatively, weights may be determined from subsystem performance estimates obtained from RDMs estimated on training data, which does not require additional copies of the quantum states.
When sufficient test copies are available, the weights may be adjusted based on 
confidence estimates obtained at inference time.

\section{Additional numerical simulations}
\label{app:Additional simulations}
Here, we present additional results to complement the numerical findings presented in the main text.

\subsection{Robustness under depolarizing noise}
To evaluate the robustness of our method under a standard noise model, we consider global depolarizing noise defined by
\begin{equation}
D_p(\rho) = (1 - p)\,\rho + p\,\frac{I}{d},
\end{equation}
where $\rho$ is an arbitrary quantum state, $p$ is the depolarizing probability, $d$ is the dimension of $\rho$, and $I$ is the identity operator. 
In the training step, depolarizing noise with total probability $p$ is applied both to the training states $\{\rho^{(i)}\}_{i=1}^{N_{\text{train}}}$ and to the tomography unitary $\bigotimes_{j=1}^{L/k}U_j$. 
That is, we perform measurements in the computational basis on the noisy states $D_p\qty(\,(\bigotimes_{j=1}^{L/k}U_j)\,\rho^{(i)}\,(\bigotimes_{j=1}^{L/k}U_j)^\dagger)$ to obtain snapshots for the partial state tomography. 
In the test step, depolarizing noise with total probability $p$ is applied both to the test states $\{\rho_{\text{test}}^{(i)}\}_{i=1}^{N_{\text{test}}}$ and to the unitary $\bigotimes_{j=1}^{L/k} [\ket*{\lambda_{j,1}}, \ldots, \ket*{\lambda_{j,2^k}}]^\dagger$
used in the partitioned quantum Neyman-Pearson test.

The error probabilities when our method is run on a 27-qubit system ($L=27$) under the depolarizing noise are shown in Fig.~\ref{fig:depolarizing noise}. 
From the panel (a), where only training noise is applied, we observe that the performance of our method deteriorates very little. 
A rough analysis suggests that this robustness arises because, when depolarizing noise acts on both $\rho$ and $\sigma$ in the POVM element of the quantum Neyman-Pearson test defined in the main text (with $a=0$ and $n=1$), we obtain
\begin{equation}
\{D_p(\rho) - D_p(\sigma) > 0\}
= \{((1-p)\rho + p\frac{I}{d}) - ((1-p)\sigma + p\frac{I}{d}) > 0\}
= \{(1-p)(\rho - \sigma) > 0\}
= \{\rho - \sigma > 0\},
\end{equation}
indicating that the POVM remains unchanged. 
Although the eigenvalues of $\rho - \sigma$ are scaled by a factor of $1 - p$, the sign discrimination of these eigenvalues is handled by classical processing in our method and is therefore largely unaffected.
On the other hand, some performance degradation is observed in the panel (b), where noise is applied in both the training and test steps, as expected given that the evaluation considers the error probabilities for a single copy of the test data.
This can be understood as follows. 
Let $\alpha_1^{(p)}$ and $\beta_1^{(p)}$ be the Type-\Romannumeral{1} and Type-\Romannumeral{2} error probabilities for a single copy under depolarizing noise with probability $p$ in the test step, and then, using the definitions of the error probabilities as given in the main text, we have
\begin{align}
\alpha_1^{(p)} &= (1-p) \alpha_1^{(p=0)} + p \Tr(\frac{I}{d}(I - M_1)) = (1-p) \alpha_1^{(p=0)} + p \qty(1 - \Tr\qty(\frac{I}{d} M_1)),\\
\beta_1^{(p)} &= (1-p) \beta_1^{(p=0)} + p \Tr\qty(\frac{I}{d}M_1),
\end{align}
for any classification method $M_1$, not just our method. 
Thus, the results in the panel (b) follow naturally from the definition of the error probabilities and the presence of depolarizing noise.
Such degradation could potentially be mitigated by improving the state preparation of test data or by employing error mitigation techniques.

\begin{figure}[htp!]
    \centering
    \includegraphics[width=1\columnwidth]{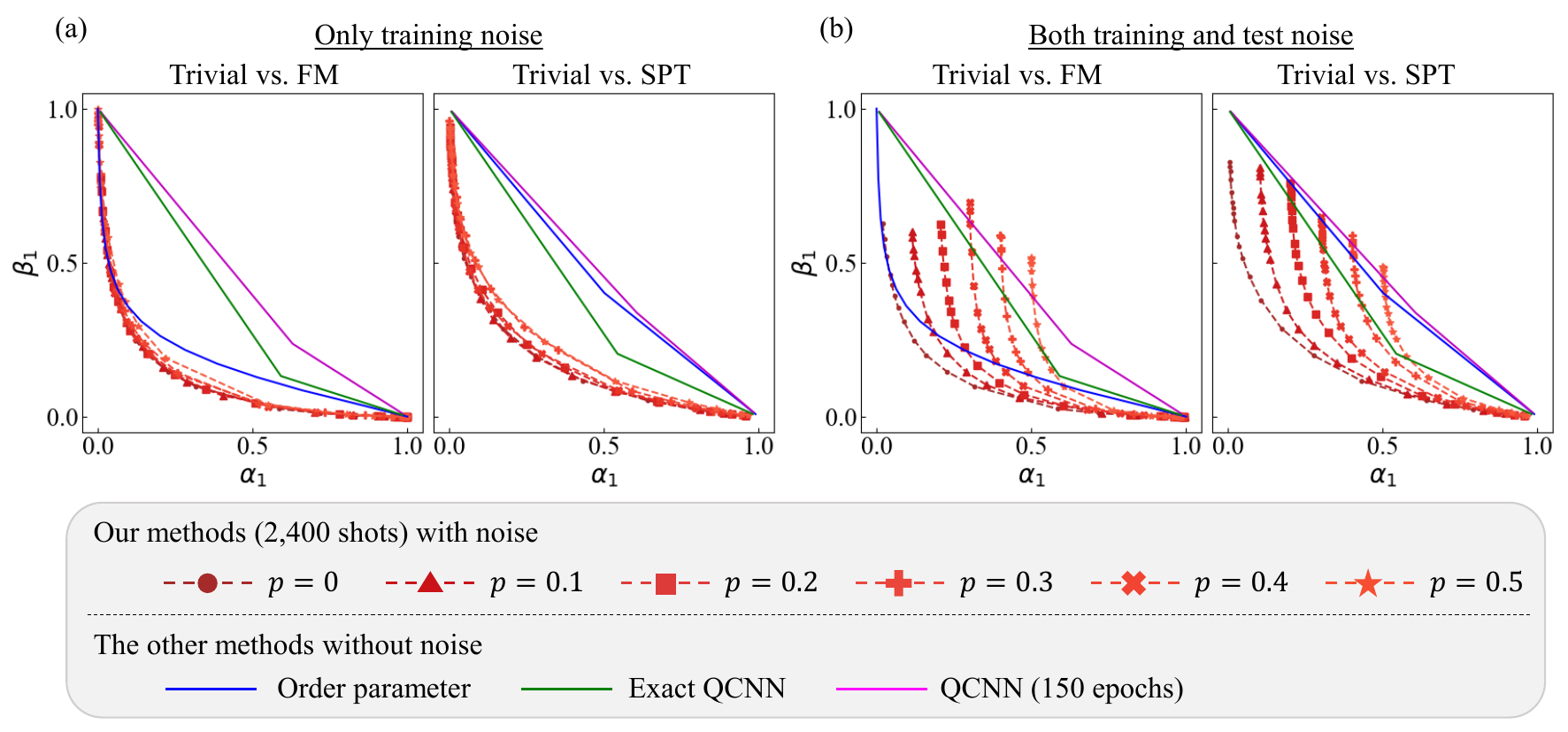}
    \caption{Type-\Romannumeral{1} and Type-\Romannumeral{2} error probabilities, $\alpha_1$ and $\beta_1$, under global depolarizing noise with probability $p$ for single-copy test data ($n=1$) on $L = 27$ qubits in the Trivial vs. FM and Trivial vs. SPT cases. The results for the other methods are identical to those in the main text, with no noise applied. 
    The panel (a) shows the results with noise applied only in the training step, and the panel (b) shows the results with noise applied in both the training and test steps.}
    \label{fig:depolarizing noise}
\end{figure}

\subsection{Error probabilities for $n_{\text{ent}} > 1$}
\label{app:n_ent>1}
The approximate quantum Neyman-Pearson test $\{S_j^{(0)}(n_{\text{ent}}, a), S_j^{(1)}(n_{\text{ent}}, a)\}$ in the main text involves an entangled measurement on $n_{\text{ent}}$ copies of quantum states. 
However, all numerical simulations in the main text were performed with $n_{\text{ent}} = 1$, as our method requires the eigendecomposition of $k n_{\text{ent}}$ qubits, which becomes exponentially difficult with increasing $n_{\text{ent}}$. 
This section demonstrates that increasing $n_{\text{ent}}$ beyond 1 provides limited benefits.

For $n_{\text{ent}} > 1$, the training steps 1 to 3 remain unchanged. 
In training step 4, the eigenvalues and eigenvectors of $\hat{\rho}_j^{\otimes n_{\text{ent}}} - e^{n_{\text{ent}}a} \hat{\sigma}_j^{\otimes n_{\text{ent}}}$ are calculated instead of $\hat{\rho}_j - e^a \hat{\sigma}_j$. 
In test step 2, measurements using the POVMs corresponding to the quantum Neyman-Pearson test are performed on $n_{\text{ent}}$ copies of the test data.

For $n_{\text{ent}} = 1$ and $3$, the Type-\Romannumeral{1} and Type-\Romannumeral{2} error probabilities for $n = 3$ copies of test data, $\alpha_3$ and $\beta_3$, are shown in Fig.~\ref{fig:n_ent g1}. 
In the partial tomography of training data, if an infinite number of copies are used, the resulting states are obtained by performing partial traces that retain only each $k$-qubit group. 
The infinite-shots results therefore correspond to the quantum Neyman-Pearson test constructed using these partial traced states for each $k$-qubit group. 
Fig.~\ref{fig:n_ent g1} indicates that there is no significant difference in error probabilities between $n_{\text{ent}} = 1$ and $3$, regardless of finite or infinite shots and whether the system size is $L = 15$ or $27$ qubits. 
However, for $n_{\text{ent}} = 3$, eigendecomposition and gate construction for the quantum Neyman-Pearson test are required for $k n_{\text{ent}} = 2\times3 = 6$ qubits, resulting in a significant increase in classical computational complexity.
While larger $n_{\text{ent}}$ might reduce error probabilities, the exponential increase in classical computational time with $n_{\text{ent}}$ makes $n_{\text{ent}} = 1$ the most practical choice.

\begin{figure}[htp!]
    \centering
    \includegraphics[width=0.7\columnwidth]{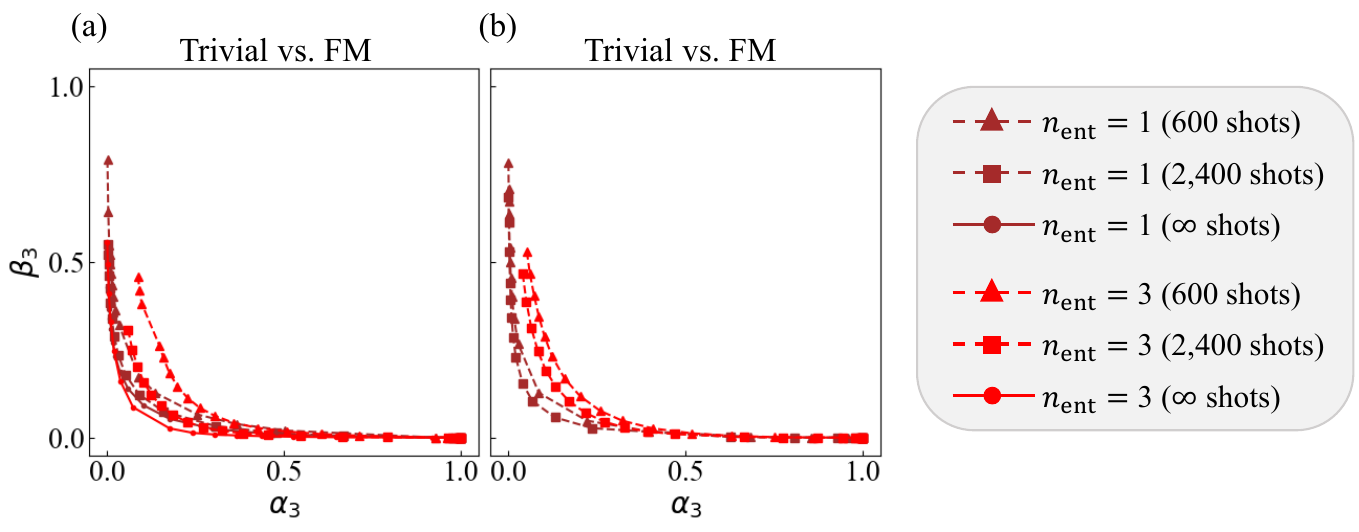}
    \caption{Type-\Romannumeral{1} and Type-\Romannumeral{2} error probabilities, $\alpha_3$ and $\beta_3$, for $n = 3$ copies of test data with $n_{\text{ent}} = 1$ and $3$. 
    Panel (a) results for an $L = 15$ qubit system, and panel (b) results for an $L = 27$ qubit system, both in the Trivial vs. FM case ($k = 2$). 
    The total number of training data copies is 600 or 2,400, with infinite-copy results also included for (a).}
    \label{fig:n_ent g1}
\end{figure}

\subsection{Two-dimensional model}
\label{app:2d model}

In this section, we demonstrate that our method could work effectively for two-dimensional models where the dividing strategy is non-trivial, even with a simple dividing method. 
We conduct numerical simulations for quantum phase classification of the two-dimensional Toric code Hamiltonian with magnetic fields
\begin{equation}
H = H_{\text{TC}} - h_X \sum_{i=1}^L X_i - h_Z \sum_{i=1}^L Z_i,
\end{equation}
where $X_i$ ($Z_i$) are the Pauli $X$ ($Z$) operators on the $i$-th qubit, and $h_X$ ($h_Z$) are the tunable strengths of the magnetic fields in the $X$ ($Z$) direction. 
Here, the Toric code Hamiltonian is
\begin{equation}
H_{\text{TC}} = - \sum_{p} A_p - \sum_{s} B_s,
\end{equation}
where $A_p = \prod_{i\in\text{plaquette}(p)} X_i$ and $B_s = \prod_{i\in\text{star}(s)} Z_i$ are the plaquette and star operators, respectively. 
The ground states of this Hamiltonian undergo a quantum phase transition between the topological phase of the Toric code and the trivial magnetic phase~\cite{Sander2025, Dusuel2011}. 
For instance, changing $h_Z$ at $h_X = 0$ induces a phase transition at $h_Z = 0.34$, and at $h_X = 0.1$, the phase transition occurs at $h_Z = 0.35$~\cite{Sander2025}.
We classify the quantum phases using the Exact QCNN and our method.
The test data, which are quantum states we want to phase classify, are common between the two methods and consist of 100 ground states near $h_X = 0.1$ and $h_Z = 0.35$. 
The training data required for our method are ground states along $h_X = 0$ with $h_Z$ evenly divided into 20 points in $[0, 0.68]$, with labels assigned as $y^{(i)} = 0$ for the trivial magnetic phase and $y^{(i)} = 1$ for the topological phase.
For numerical simulations, we use the MPS mapped to one dimension, as illustrated by the red dash-dot line in Fig.~\ref{fig:2d results}(a). 
Approximate ground states are prepared using the finite-size DMRG algorithm with a maximum bond dimension of 1500.

We describe the settings for each method.
\paragraph*{Exact QCNN.}
For classification using the Exact QCNN, detailed in Section \ref{app:QCNN and cNeyman}, the test data are input into the circuit proposed in Ref.~\cite{Sander2025}, and then the output expectation values are tested whether they exceed 0.5 using the classical Neyman-Pearson test. 
Test data with output values above 0.5 are classified as label $y^{(i)} = 1$, and those not exceeding 0.5 as label $y^{(i)} = 0$.
The limited system size in our numerical simulation ($L = 18$ qubits) restricts the circuit depth, which may hinder the Exact QCNN from achieving its expected performance.

\paragraph*{Our method.}
For our method, we use $k = 2$ qubit groups and the algorithm described in the Methods section of the main text. 
We divide the quantum many-body states into groups of $k = 2$ qubits based on the one-dimensional ordering shown in Fig.~\ref{fig:2d results}(a), where neighboring qubits are grouped sequentially from one end of the chain. 
Other settings, such as the partial tomography method, follow the configurations in the main text.

\vspace{3mm}
The Type-\Romannumeral{1} and Type-\Romannumeral{2} error probabilities, $\alpha_n$ and $\beta_n$, for the test data on $L = 18$ qubits are shown in Figs.~\ref{fig:2d results}(b)(c). 
These figures indicate that our method achieves sufficient performance even with simple dividing strategy. 
Note that the Exact QCNN is provided with prior knowledge of quantum phases but no training data, whereas our method uses training data without prior knowledge of the quantum phases, and this difference in problem setup precludes a direct comparison of performance.

\begin{figure}[htp!]
    \centering
    \includegraphics[width=1\columnwidth]{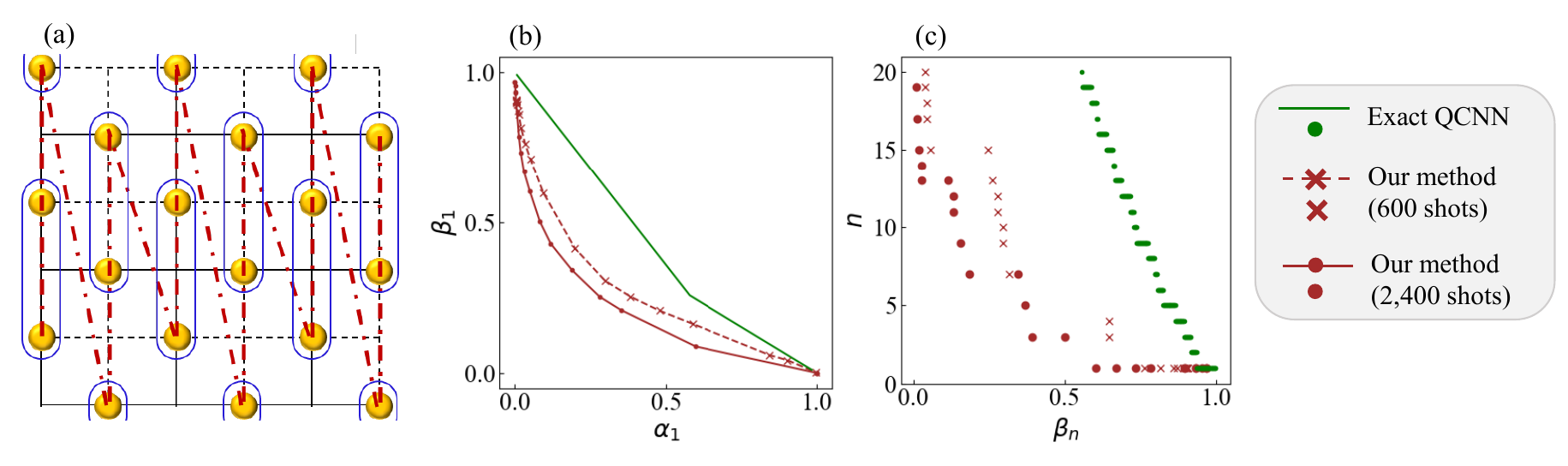}
    \caption{(a) Toric code lattice with $L = 18$ qubits. 
    The red dashed-dotted line indicates the mapping order to one dimension for the MPS representation, and the blue circles represent the dividing strategy used in our method. 
    (b)(c) Type-\Romannumeral{1} and Type-\Romannumeral{2} error probabilities, $\alpha_n$ and $\beta_n$, for the test data in the Exact QCNN and our method on $L=18$ qubits. 
    Panel (b) shows the error probabilities $\alpha_{1}$ and $\beta_{1}$ for a single-copy test dataset ($n = 1$), whereas panel (c) shows the number of test data copies $n$ required to achieve $\beta_n$ under the condition $\alpha_n \leq 5\,\%$. 
    The total number of training data copies used for our method is 600 or 2,400, while the Exact QCNN is the circuit proposed in Ref.~\cite{Sander2025}, which does not require training.
    It is possible that the Exact QCNN does not perform sufficiently due to the structural constraints on the $L=18$ qubit system size.}
    \label{fig:2d results}
\end{figure}

\section{Details of methods other than our method}
Here, we provide detailed explanations of the methods other than our method used for comparison in the numerical simulations presented in the main text.

\subsection{Classical hypothesis testing}

Classical hypothesis testing~\cite{Fisher1935,Neyman1933,Karlin1956,Lehmann2006} serves as a core approach in statistical inference, developed to evaluate competing hypotheses based on sample data. 
In this framework, two hypotheses are considered: the null hypothesis $H_0$, which represents a default state or baseline assumption, and the alternative hypothesis $H_1$, which represents an effect or deviation from the baseline. 
Classical hypothesis testing procedures are used to make decisions under uncertainty, particularly in fields like scientific research, quality control, and medical diagnostics, where making reliable decisions based on observed data is essential.

The decision-making process involves controlling error probabilities associated with incorrect decisions. 
Specifically, a Type-\Romannumeral{1} error occurs if we reject $H_0$ when it is true, with probability denoted by $\alpha$, and a Type-\Romannumeral{2} error occurs if we fail to reject $H_0$ when $H_1$ is true, with probability denoted by $\beta$.
Balancing these errors is central to classical testing; the significance level $\alpha$ is usually predefined, while the power of the test, $1-\beta$, represents the probability of correctly rejecting $H_0$ when $H_1$ is true. 

A fundamental result in hypothesis testing is the Neyman-Pearson lemma, which provides a criterion for constructing the most powerful test for simple hypotheses at a given significance level $\alpha$. 

\begin{lemma}[Neyman-Pearson Lemma~\cite{Neyman1933}]
\label{lemma:neyman-pearson}
Let $\theta$ be a parameter determining the probability distribution of a random variable $X$, which has a probability density $p(x \mid \theta)$.
Consider the simple null hypothesis $H_0: \theta = \theta_0$ and the simple alternative hypothesis $H_1: \theta = \theta_1$. 
Define the likelihood ratio as 
\begin{equation}
\Lambda(x) = \frac{p(x \mid \theta_1)}{p(x \mid \theta_0)}.
\end{equation}

A test $\phi$ of the form
\begin{equation}
\phi(x) = 
\begin{cases} 
1, & \mathrm{if}~\Lambda(x) > c \\ 
\gamma, & \mathrm{if}~\Lambda(x) = c \\
0, & \mathrm{if}~\Lambda(x) < c
\end{cases},
\end{equation}
exists such that it is the most powerful test at level $\alpha$, maximizing the power $1 - \beta$. 
Here, $c$ is the smallest constant satisfying $ \Pr\{\Lambda(X) > c \mid \theta = \theta_0\} \leq \alpha $, 
and $\gamma \in [0, 1]$ is chosen to satisfy $ \Pr\{\Lambda(X) > c \mid \theta = \theta_0\} + \gamma \Pr\{\Lambda(X) = c \mid \theta = \theta_0\} = \alpha $.
\end{lemma}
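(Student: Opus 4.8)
The plan is to prove the two assertions bundled in the statement: first, that a threshold $c$ and randomization probability $\gamma$ exist so that the likelihood-ratio test $\phi$ has size exactly $\alpha$; and second, that any competing test of level at most $\alpha$ has no greater power. I would begin with the existence-and-calibration step. Consider the null survival function $G(t)=\Pr\{\Lambda(X)>t\mid\theta=\theta_0\}$, which is nonincreasing and right-continuous in $t$. Taking $c$ to be the smallest value with $G(c)\le\alpha$, as in the statement, and choosing $\gamma\in[0,1]$ to solve $G(c)+\gamma\,\Pr\{\Lambda(X)=c\mid\theta_0\}=\alpha$, I would verify directly that $\int\phi(x)\,p(x\mid\theta_0)\,dx=\alpha$. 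The value $c$ is well defined by monotonicity and right-continuity of $G$, and $\gamma$ is forced by the size of the atom of $\Lambda(X)$ at $c$; when $\Lambda(X)$ has no atom there, $\gamma$ plays no role. Since the likelihood ratio is nonnegative, the calibrated threshold satisfies $c\ge 0$ (for $\alpha<1$; the case $\alpha=1$ is trivial).

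The heart of the argument is a single pointwise sign inequality. Let $\phi'$ be any test with $\int\phi'(x)\,p(x\mid\theta_0)\,dx\le\alpha$. I would examine the product
\[
\big(\phi(x)-\phi'(x)\big)\big(p(x\mid\theta_1)-c\,p(x\mid\theta_0)\big)
\]
and argue it is nonnegative for every $x$. On the set where $p(x\mid\theta_1)>c\,p(x\mid\theta_0)$ the test fires with $\phi(x)=1$, so $\phi(x)-\phi'(x)\ge 0$ while the second factor is strictly positive; where $p(x\mid\theta_1)<c\,p(x\mid\theta_0)$ we have $\phi(x)=0$, so $\phi(x)-\phi'(x)\le 0$ while the second factor is strictly negative; and on the boundary the second factor vanishes. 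In all three cases the product is nonnegative.

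Integrating this inequality over $x$ and separating the two factors yields
\[
\int(\phi-\phi')\,p(x\mid\theta_1)\,dx\;\ge\;c\int(\phi-\phi')\,p(x\mid\theta_0)\,dx.
\]
I would then close the argument by noting that $\int\phi\,p(x\mid\theta_0)\,dx=\alpha$ by calibration, while $\int\phi'\,p(x\mid\theta_0)\,dx\le\alpha$ by hypothesis, so the null-integral factor on the right is nonnegative; combined with $c\ge 0$, the entire right-hand side is nonnegative. Hence $\int\phi\,p(x\mid\theta_1)\,dx\ge\int\phi'\,p(x\mid\theta_1)\,dx$, i.e. the power of $\phi$ dominates that of $\phi'$, which is exactly the most-powerful claim.

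I expect the main obstacle to be the existence-and-calibration step rather than the sign inequality. One must argue carefully that the randomized threshold test attains null size \emph{exactly} $\alpha$, handling the possible atom $\Pr\{\Lambda(X)=c\mid\theta_0\}>0$ through $\gamma$, and treating the degenerate region where $p(x\mid\theta_0)=0$ (so that $\Lambda$ is defined by the convention $\Lambda=+\infty$ and $\phi=1$ there, contributing to power but not to null size). The pointwise inequality is robust, but its payoff relies entirely on having $c\ge 0$ and the null size pinned to $\alpha$, so the bookkeeping around the boundary and the atom is where the care is genuinely required.
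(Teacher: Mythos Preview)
Your argument is the standard and correct proof of the Neyman--Pearson lemma: the pointwise sign inequality $(\phi-\phi')(p_1-c\,p_0)\ge 0$ integrated against the dominating measure, together with the calibration $\int\phi\,p_0=\alpha$ and $c\ge 0$, yields the power comparison exactly as you describe. The existence of $c$ via right-continuity of $G$ and the choice of $\gamma\in[0,1]$ from the atom at $c$ are handled carefully, and your remark on the convention $\Lambda=+\infty$ where $p_0=0$ is the right bookkeeping.

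There is, however, nothing to compare against: the paper states Lemma~\ref{lemma:neyman-pearson} with a citation to Neyman--Pearson (1933) and does not supply its own proof. It is invoked purely as background for Lemma~\ref{lemma:karlin-rubin} and the subsequent applications. Your write-up is thus a self-contained addition rather than an alternative to anything in the text.
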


On the other hand, it is challenging to find a uniformly most powerful (UMP) test for all parameter values under composite hypotheses. 
However, under specific conditions such as a monotone likelihood ratio (MLR), it is possible to construct the UMP test. 
Here, the MLR is defined as a likelihood ratio such that the probability distribution differs for any $\theta_0<\theta_1$ and the ratio $p(x|\theta_1)/p(x|\theta_0)$ is a non-decreasing function of a real-valued function $T(x)$. 
The Neyman-Pearson lemma~\ref{lemma:neyman-pearson} is extended to composite hypotheses with MLR as the following lemma.

\begin{lemma}[\cite{Karlin1956}]
\label{lemma:karlin-rubin}
Let $\theta$ be a real parameter, and let $X$ be a random variable with a probability density $p(x \mid \theta)$ with MLR in a statistic $T(x)$. 
Consider the following hypotheses: 
\begin{align*}
\textup{(i)}~& H_0: \theta = \theta_{\mathrm{th}}, \quad H_1: \theta > \theta_{\mathrm{th}}\\
\textup{(ii)}~& H_0: \theta \leq \theta_{\mathrm{th}}, \quad H_1: \theta > \theta_{\mathrm{th}}.
\end{align*}

In both cases \textup{(i)} and \textup{(ii)}, there exists a UMP test $\phi$ of the form
\begin{equation}
\phi(x) = 
\begin{cases} 
1, & \mathrm{if}~T(x) > c\\
\gamma, & \mathrm{if}~T(x) = c\\
0, & \mathrm{if}~T(x) < c
\end{cases},
\end{equation}
at level $\alpha$, maximizing the power $1 - \beta$ for all $\theta$. 
Here, $c$ is the smallest constant satisfying $ \Pr\{T(X) > c \mid \theta = \theta_{\mathrm{th}}\} \leq \alpha $, 
and $\gamma \in [0, 1]$ is chosen to satisfy $ \Pr\{T(X) > c \mid \theta = \theta_{\mathrm{th}}\} + \gamma \Pr\{T(X) = c \mid \theta = \theta_{\mathrm{th}}\} = \alpha $.
\end{lemma}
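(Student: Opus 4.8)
The plan is to reduce both composite problems to families of simple-versus-simple tests and to apply the Neyman--Pearson Lemma (Lemma~\ref{lemma:neyman-pearson}) in a way that produces a single test working uniformly over all alternatives. First I would treat case (i). Fix an arbitrary $\theta_1 > \theta_{\mathrm{th}}$ and consider the simple hypotheses $\theta = \theta_{\mathrm{th}}$ versus $\theta = \theta_1$, with likelihood ratio $\Lambda(x) = p(x\mid\theta_1)/p(x\mid\theta_{\mathrm{th}})$. By the MLR assumption $\Lambda$ is a nondecreasing function of $T(x)$, so for an appropriate cutoff $k$ one has the inclusions $\{\Lambda > k\}\subseteq\{T > c\}\subseteq\{\Lambda\ge k\}$; consequently the test $\phi$ that rejects when $T(x) > c$, randomizes with probability $\gamma$ when $T(x) = c$, and accepts otherwise can be rewritten in the Neyman--Pearson form for this pair, with the randomization constant matched. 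Since $c$ is the smallest constant with $\Pr\{T(X) > c \mid \theta_{\mathrm{th}}\}\le\alpha$ and $\gamma$ is then fixed by $\Pr\{T(X)>c\mid\theta_{\mathrm{th}}\}+\gamma\Pr\{T(X)=c\mid\theta_{\mathrm{th}}\}=\alpha$, the test $\phi$ has exact size $\alpha$ under $\theta_{\mathrm{th}}$, and --- the crucial observation --- both $c$ and $\gamma$ depend only on the null distribution of $T$, not on $\theta_1$. Lemma~\ref{lemma:neyman-pearson} then gives that $\phi$ is most powerful against every $\theta_1 > \theta_{\mathrm{th}}$ simultaneously, i.e.\ UMP for case (i).

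For case (ii) I would first establish that the power function $\theta\mapsto E_\theta[\phi(X)]$ is nondecreasing. For $\theta < \theta'$ the MLR property applied to the pair $(\theta,\theta')$ shows, by the same argument, that $\phi$ is of Neyman--Pearson form for testing $\theta$ against $\theta'$ at its own size $\alpha_0 = E_\theta[\phi(X)]$; comparing it with the trivial randomized test that rejects with constant probability $\alpha_0$ and invoking Lemma~\ref{lemma:neyman-pearson} yields $E_{\theta'}[\phi(X)] \ge \alpha_0 = E_\theta[\phi(X)]$. Monotonicity gives $\sup_{\theta\le\theta_{\mathrm{th}}} E_\theta[\phi(X)] = E_{\theta_{\mathrm{th}}}[\phi(X)] = \alpha$, so $\phi$ is a valid level-$\alpha$ test for the composite null $\theta\le\theta_{\mathrm{th}}$. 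Finally, any level-$\alpha$ test for $\theta\le\theta_{\mathrm{th}}$ is in particular level-$\alpha$ for the simple null $\theta = \theta_{\mathrm{th}}$, and by case (i) the same $\phi$ already dominates every such test against each fixed $\theta_1 > \theta_{\mathrm{th}}$; hence $\phi$ is UMP for case (ii) as well.

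The main obstacle I anticipate is the bookkeeping at the boundary $\{T(X)=c\}$: because MLR only guarantees that $\Lambda$ is \emph{nondecreasing} rather than strictly increasing in $T$, the level set $\{\Lambda = k\}$ may pull back to an interval of $T$-values, and one must verify carefully that the $T$-thresholding test coincides with an exact Neyman--Pearson test --- including the matching of the randomization probability --- handling separately the atomic case $\Pr\{T(X)=c\mid\theta_{\mathrm{th}}\}>0$ and the non-atomic case. A minor accompanying point is that the MLR hypothesis (which includes that the distributions genuinely differ for $\theta_0\neq\theta_1$) forces $\Lambda$ to be non-constant, so that the likelihood-ratio cutoff describing $\{T>c\}$ is well defined and the resulting test is nontrivial.
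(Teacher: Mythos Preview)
The paper does not actually prove this lemma: it is stated with a citation to Karlin--Rubin~\cite{Karlin1956} and then invoked as a black box in the proof of Theorem~\ref{thm:UMP pauli string}. Your argument is the standard textbook proof (essentially that of Lehmann, \emph{Testing Statistical Hypotheses}): reduce case~(i) to a family of simple-vs-simple problems via Neyman--Pearson, observe that the resulting threshold test is independent of the alternative, then bootstrap to case~(ii) by proving monotonicity of the power function and noting that the composite-null size is attained at the boundary $\theta_{\mathrm{th}}$. This is correct, and the subtleties you flag --- the possible flatness of $\Lambda$ as a function of $T$ on the boundary level set, and the matching of the randomization constant in both the atomic and non-atomic cases --- are exactly the points that require care; they are not genuine obstacles, only bookkeeping, and your sketch handles them appropriately.
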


\subsection{Order parameter}
\label{app:order and cNeyman}
\paragraph*{Classical Neyman-Pearson test using the order parameter of the SPT phase}

In the context of Trivial vs. SPT, the classification between the trivial phase and SPT phase is achieved by testing whether the expectation value of the order parameter $O_{\text{SPT}}$ is zero or not. 
To establish this, we first prove the following theorem (case (i) is applied in this subsection, while case (ii) is used in the next subsection) concerning the task of testing the expectation value of a Pauli string through projective measurements.

\begin{theorem}
\label{thm:UMP pauli string}
Let $O$ be a general Pauli string represented by the spectral decomposition 
\begin{equation}
O = (+1)\Pi_{+} + (-1)\Pi_{-},
\end{equation}
where $\Pi_{\pm}$ are projection operators onto the eigenspaces corresponding to eigenvalues $\pm1$.
For a given quantum state $\rho$, consider the projective measurement described by $O$, with probabilities $p(\pm1) = \Tr(\rho \Pi_{\pm})$. 
Suppose this measurement is performed $n$ times, and let $x$ denote the number of times the eigenvalue $+1$ is observed. Define $\ev{O} = \Tr(\rho O) = p(+1) - p(-1)$ as the expectation value of the Pauli string $O$. 
Consider the following hypotheses:
\begin{align*}
\textup{(i)}~& H_0: \ev{O} = \ev{O}_{\mathrm{th}}, \quad H_1: \ev{O} > \ev{O}_{\mathrm{th}}\\
\textup{(ii)}~& H_0: \ev{O} \leq \ev{O}_{\mathrm{th}}, \quad H_1: \ev{O} > \ev{O}_{\mathrm{th}}.
\end{align*}

In both cases \textup{(i)} and \textup{(ii)}, there exists a UMP test $\phi$ of the form
\begin{equation}\label{eq:test of thm:UMP pauli string}
\phi(x) = 
\begin{cases} 
1, & \mathrm{if}~x > c \\
\gamma, & \mathrm{if}~x = c \\
0, & \mathrm{if}~x < c
\end{cases},
\end{equation}
at level $\alpha$, maximizing the power $1 - \beta$ for all $\ev{O}$. 
Here, $c$ is the smallest non-negative integer satisfying $\Pr(x > c \mid \ev{O} = \ev{O}_{\mathrm{th}}) \leq \alpha $,
and $\gamma \in [0, 1]$ is chosen to satisfy $\Pr(x > c \mid \ev{O} = \ev{O}_{\mathrm{th}}) + \gamma \Pr(x = c \mid \ev{O} = \ev{O}_{\mathrm{th}}) = \alpha$.
\end{theorem}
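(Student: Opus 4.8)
The plan is to reduce the statement to the classical Karlin--Rubin theorem (Lemma~\ref{lemma:karlin-rubin}) by identifying the correct one-parameter family of sampling distributions. First I would observe that the $n$ projective measurements of $O$ on independent copies of $\rho$ constitute i.i.d.\ Bernoulli trials: each trial returns the eigenvalue $+1$ with probability $p:=p(+1)=\Tr(\rho\Pi_{+})$ and $-1$ with probability $1-p=\Tr(\rho\Pi_{-})$. Hence the count $x$ of $+1$ outcomes is distributed as $\mathrm{Bin}(n,p)$. The key point is that the observed data enter the decision problem only through $x$, and the law of $x$ depends on $\rho$ only through the scalar $p$; two states with the same $\Tr(\rho\Pi_{+})$ are statistically indistinguishable in this experiment, so the effective parameter is $p\in[0,1]$.

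Next I would record the affine, strictly increasing reparametrization $\ev{O}=p(+1)-p(-1)=2p-1$, equivalently $p=(1+\ev{O})/2$. Under this bijection the hypotheses translate verbatim into hypotheses on $p$: setting $p_{\mathrm{th}}:=(1+\ev{O}_{\mathrm{th}})/2$, case \textup{(i)} becomes $H_0:p=p_{\mathrm{th}}$ versus $H_1:p>p_{\mathrm{th}}$, and case \textup{(ii)} becomes $H_0:p\le p_{\mathrm{th}}$ versus $H_1:p>p_{\mathrm{th}}$; moreover a level-$\alpha$ test in one parametrization is a level-$\alpha$ test in the other with an identical power function, so the UMP property is preserved by transport along this bijection.

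Then I would verify the monotone likelihood ratio (MLR) property of the binomial family in the statistic $T(x)=x$. For $p_0<p_1$ the likelihood ratio is
\[
\frac{\binom{n}{x}p_1^{x}(1-p_1)^{n-x}}{\binom{n}{x}p_0^{x}(1-p_0)^{n-x}}
=\left(\frac{1-p_1}{1-p_0}\right)^{n}\left(\frac{p_1(1-p_0)}{p_0(1-p_1)}\right)^{x},
\]
and since $p_1(1-p_0)>p_0(1-p_1)$ the base of the $x$-power exceeds $1$, so the ratio is non-decreasing in $x$; this is exactly the MLR condition. Applying Lemma~\ref{lemma:karlin-rubin} with $\theta=p$, $\theta_{\mathrm{th}}=p_{\mathrm{th}}$, $T(x)=x$ then yields, in both cases \textup{(i)} and \textup{(ii)}, a UMP level-$\alpha$ test of the threshold-plus-randomization form in Eq.~\eqref{eq:test of thm:UMP pauli string}. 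Translating the defining conditions on $c$ and $\gamma$ back through $p_{\mathrm{th}}\leftrightarrow\ev{O}_{\mathrm{th}}$ reproduces the stated characterization, with $c$ constrained to $\{0,1,\dots,n\}$ because $x$ is integer-valued (and $c=n$ always satisfies the defining inequality, so the smallest such non-negative integer exists).

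I do not expect a genuine obstacle so much as one point requiring care, namely the first step: arguing cleanly that the repeated projective measurement produces an honest i.i.d.\ binomial sample whose distribution depends on $\rho$ solely through $p=\Tr(\rho\Pi_{+})$, so that this is a bona fide one-parameter testing problem of the type covered by Karlin--Rubin rather than a problem parametrized by the full state. Once that reduction is made precise, the remainder is the standard binomial-MLR computation together with the monotone change of variables.
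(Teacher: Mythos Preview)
Your proposal is correct and follows essentially the same route as the paper: identify $x\sim\mathrm{Bin}(n,p_{\mathrm{bin}})$ with $p_{\mathrm{bin}}=(1+\ev{O})/2$, translate the hypotheses into one-sided hypotheses on $p_{\mathrm{bin}}$, verify the binomial MLR in $T(x)=x$ via the same likelihood-ratio computation you wrote, and invoke Lemma~\ref{lemma:karlin-rubin}. Your extra remark that the law of $x$ depends on $\rho$ only through $\Tr(\rho\Pi_{+})$ is a welcome clarification but does not change the argument.
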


\begin{proof}
The random variable $X$, which is the number of times that the eigenvalue $+1$ is observed in $n$ projective measurements described by $O$, follows a binomial distribution 
\begin{equation}
p(x \mid n, p_{\text{bin}}) = \binom{n}{x} p_{\text{bin}}^{x} (1 - p_{\text{bin}})^{n-x}, 
\end{equation}
where $p_{\text{bin}} = p(+1) = \frac{1 + \ev{O}}{2}$. 
In case (i), the hypotheses are rewritten as
\begin{equation}
H_0: p_{\text{bin}} = p_0 = \frac{1 + \ev{O}_{\mathrm{th}}}{2}, \quad 
H_1: p_{\text{bin}} = p_1 > \frac{1 + \ev{O}_{\mathrm{th}}}{2},
\end{equation}
and, in case (ii), the hypotheses are
\begin{equation}
H_0: p_{\text{bin}} = p_0 \leq \frac{1 + \ev{O}_{\mathrm{th}}}{2}, \quad 
H_1: p_{\text{bin}} = p_1 > \frac{1 + \ev{O}_{\mathrm{th}}}{2}.
\end{equation}
The likelihood ratio for these hypotheses is
\begin{equation}
\Lambda(x) = \frac{p(x \mid p_{\text{bin}} = p_1)}{p(x \mid p_{\text{bin}} = p_0)} 
= \frac{\binom{n}{x} p_1^x (1 - p_1)^{n-x}}{\binom{n}{x} p_0^x (1 - p_0)^{n-x}} 
= \qty(\frac{p_1 (1 - p_0)}{p_0 (1 - p_1)})^x \qty(\frac{1 - p_1}{1 - p_0})^n.
\end{equation}
Since $p_1 > p_0$, it follows that
\begin{equation}
\frac{p_1 (1 - p_0)}{p_0 (1 - p_1)} > \frac{p_0 (1 - p_1)}{p_0 (1 - p_1)} = 1.
\end{equation}
Thus, $\Lambda(x)$ is a monotonically increasing function of the statistic $T(x) = x$, i.e., the MLR in $T(x) = x$. 
By Lemma~\ref{lemma:karlin-rubin}, the test function defined in Eq.~(\ref{eq:test of thm:UMP pauli string}) is the UMP test for all $\ev{O}$ in both cases (i) and (ii).
At a given significance level $\alpha$, $c$ is the smallest non-negative integer satisfying 
\begin{equation}
\Pr(x > c \mid \ev{O} = \ev{O}_{\mathrm{th}}) = \Pr(x > c \mid p_{\text{bin}} = \frac{1 + \ev{O}_{\mathrm{th}}}{2}) 
= \sum_{x > c} \binom{n}{x} \qty(\frac{1 + \ev{O}_{\mathrm{th}}}{2})^x \qty(\frac{1 - \ev{O}_{\mathrm{th}}}{2})^{n-x} \leq \alpha,
\end{equation}
and $\gamma \in [0, 1]$ is chosen to satisfy
\begin{equation}
\begin{split}
&\Pr(x > c \mid \ev{O} = \ev{O}_{\mathrm{th}}) + \gamma \Pr(x = c \mid \ev{O} = \ev{O}_{\mathrm{th}}) \\
=& \Pr(x > c \mid p_{\text{bin}} = \frac{1 + \ev{O}_{\mathrm{th}}}{2}) + \gamma \Pr(x = c \mid p_{\text{bin}} = \frac{1 + \ev{O}_{\mathrm{th}}}{2}) \\
=& \sum_{x > c} \binom{n}{x} \qty(\frac{1 + \ev{O}_{\mathrm{th}}}{2})^x \qty(\frac{1 - \ev{O}_{\mathrm{th}}}{2})^{n-x} + \gamma \binom{n}{c} \qty(\frac{1 + \ev{O}_{\mathrm{th}}}{2})^c \qty(\frac{1 - \ev{O}_{\mathrm{th}}}{2})^{n-c} \\
=& \alpha.
\end{split}
\end{equation}
\end{proof}

By performing the projective measurement described by the order parameter $O_{\text{SPT}}$ $n$ times, we test whether the expectation value $\ev{O_{\text{SPT}}}$ is zero or not. 
The test data $\{ \rho_{\text{test}}^{(i)} \}_{i=1}^{N_{\text{test}}}$ that we used all have expectation values of $O_{\text{SPT}}$ greater than or equal to zero. 
Thus, the hypothesis corresponding to the trivial phase (labeled $y^{(i)} = 0$) is $H_0: \text{Tr}(\rho_{\text{test}}^{(i)} O_{\text{SPT}}) = 0$, and the hypothesis corresponding to the SPT phase (labeled $y^{(i)} = 1$) is $H_1: \text{Tr}(\rho_{\text{test}}^{(i)} O_{\text{SPT}}) > 0$.
The test described in case (i) of Thm.~\ref{thm:UMP pauli string} with $\ev{O}_{\text{th}} = 0$ is therefore UMP for these hypotheses. 
We refer to this test as the classical Neyman-Pearson test for the SPT phase.

Here, the Type-\Romannumeral{1} error probability is given by
\begin{equation}
\alpha_n^{(i)} = \sum_{x > c} \binom{n}{x} \qty(\frac{1 + \ev{O}^{(i)}}{2})^x 
\qty(\frac{1 - \ev{O}^{(i)}}{2})^{n-x} 
+ \gamma \binom{n}{c} \qty(\frac{1 + \ev{O}^{(i)}}{2})^c 
\qty(\frac{1 - \ev{O}^{(i)}}{2})^{n-c},
\end{equation}
where $\ev{O}^{(i)} = \text{Tr}(\rho_{\text{test}}^{(i)} O_{\text{SPT}})$ for test data labeled $y^{(i)} = 0$.
The Type-\Romannumeral{2} error probability is given by
\begin{equation}
\beta_n^{(i)} = \sum_{x < c} \binom{n}{x} \qty(\frac{1 + \ev{O}^{(i)}}{2})^x 
\qty(\frac{1 - \ev{O}^{(i)}}{2})^{n-x} 
+ (1 - \gamma) \binom{n}{c} \qty(\frac{1 + \ev{O}^{(i)}}{2})^c 
\qty(\frac{1 - \ev{O}^{(i)}}{2})^{n-c},
\end{equation}
where $\ev{O}^{(i)} = \text{Tr}(\rho_{\text{test}}^{(i)} O_{\text{SPT}})$ for test data labeled $y^{(i)} = 1$.
The error probabilities $\alpha_n$ and $\beta_n$ computed in the main text are the averages of $\alpha_n^{(i)}$ and $\beta_n^{(i)}$ over the test data for each phase, respectively.

\paragraph*{Bayesian test using the order parameter of the FM phase}

In the context of Trivial vs. FM, the classification between the trivial phase and the FM phase is performed by testing whether the expectation value of the order parameter $O_{\text{FM}}$, defined as a linear combination of local observables, is zero or not. 
While the optimal classical post-processing for the SPT order parameter, QCNN, and Exact QCNN after measurements is determined based on Thm.~\ref{thm:UMP pauli string}, the same does not hold for $O_{\text{FM}}$. 
This is because, unlike other cases where the measurement results follow a Bernoulli distribution, the measurement outcomes associated with $O_{\text{FM}}$ follow a multinomial distribution. 
For multinomial distributions, there is no general result providing a UMP test analogous to Lemma~\ref{lemma:karlin-rubin}.
Therefore, since knowing the order parameter implies possessing prior knowledge about quantum phases, we employ a Bayesian test to make effective use of this knowledge. 

The eigenvalue decomposition of the order parameter $O_{\text{FM}}$ is given by
\begin{equation}
O_{\text{FM}} = \frac{1}{L} \sum_{i=1}^{L} Z_i = \sum_{m=0}^{L} \lambda_m \Pi_m,
\end{equation}
where $\lambda_m = 1 - 2\frac{m}{L}$ are the eigenvalues, and $\Pi_m = \sum_{|\lambda|=m} \ketbra{\lambda}$ represents the projection onto the subspace spanned by computational basis states with Hamming weight $|\lambda| = m$. 
Performing the projective measurement described by $O_{\text{FM}}$ $n$ times, we denote $X_m$ as the number of times $\Pi_m$ is observed and define the random variable $\mathbf{X} = (X_0, ..., X_L)$.
To formalize this setting, we introduce the multinomial and Dirichlet distributions. 
The multinomial distribution $\text{Mul}(n, \mathbf{p})$ describes the probability of obtaining category counts $\mathbf{X} = (X_0, ..., X_L)$ given event probabilities $\mathbf{p} = (p_0, ..., p_L)$, where $p_m \geq 0$ and $\sum_{m=0}^{L} p_m = 1$. 
The probability mass function is
\begin{equation}
p_{\text{Mul}}(\mathbf{x} \mid n, \mathbf{p}) = \frac{n!}{x_0! \dots x_L!} \prod_{m=0}^{L} p_m^{x_m}, 
\end{equation}
where $\sum_{m=0}^{L} x_m = n$.
The Dirichlet distribution $\text{Dir}(\mathbf{\alpha})$ is defined over the $L$-dimensional simplex $\mathbf{P} = (P_0, ..., P_L)$, where $P_m \geq 0$ and $\sum_{m=0}^{L} P_m = 1$. 
It is parameterized by pseudo-counts $\mathbf{\alpha} = (\alpha_0, ..., \alpha_L)$, with the probability density function
\begin{equation}
p_{\text{Dir}}(\mathbf{p} \mid \mathbf{\alpha}) = \frac{1}{B(\mathbf{\alpha})} \prod_{m=0}^{L} p_m^{\alpha_m - 1},
\end{equation}
where $B(\mathbf{\alpha})$ is the multivariate Beta function given by
\begin{equation}
B(\mathbf{\alpha}) = \frac{\prod_{m=0}^{L} \Gamma(\alpha_m)}{\Gamma\qty(\sum_{m=0}^{L} \alpha_m)},
\end{equation}
with $\Gamma(\cdot)$ denoting the Gamma function.

The hypothesis corresponding to the trivial phase (labeled $y^{(i)} = 0$) is $H_0: \text{Tr}(\rho_{\text{test}}^{(i)} O_{\text{FM}}) = 0$, and the true distribution of $\mathbf{X}$ follows $\text{Mul}(n, \mathbf{p}=(\text{Tr}(\rho_{\text{test}}^{(i)} \Pi_0), ..., \text{Tr}(\rho_{\text{test}}^{(i)} \Pi_L)), H_0) = \text{Mul}_0^{(i)}$. 
As a prior for the multinomial probabilities $\mathbf{P}$, we choose a Dirichlet distribution $\text{Dir}(\mathbf{\alpha} = \mathbf{\alpha}_0, H_0)$, where
\begin{equation}
\mathbf{\alpha}_0 = \frac{1}{2^L} \qty(\binom{L}{0}, ..., \binom{L}{L}).
\end{equation}
This choice is motivated by the fact that the trivial phase in the one-dimensional cluster-Ising model is dominated by $\sum_{i=1}^{L} X_i$, whose ground states are uniform superpositions of computational basis states. 
Since each $\Pi_m$ represents a sum of projections onto $\binom{L}{m}$ individual computational basis states, we choose the Dirichlet prior so that $\alpha_m$ is proportional to $\binom{L}{m}$.
The marginal likelihood under $H_0$ is then computed as
\begin{equation}
\begin{split}
p(\mathbf{x} \mid n, H_0) &= \int p_{\text{Mul}}(\mathbf{x} \mid n, \mathbf{p}) p_{\text{Dir}}(\mathbf{p} \mid \mathbf{\alpha}
= \mathbf{\alpha}_0, H_0) d\mathbf{p}\\
&= \int \frac{n!}{x_0! \dots x_L!} \prod_{m=0}^{L} p_m^{x_m} \frac{1}{B(\mathbf{\alpha}_0)} \prod_{m=0}^{L} p_m^{\alpha_m - 1} d\mathbf{p}\\
&= \frac{n!}{x_0! \dots x_L!} \frac{1}{B(\mathbf{\alpha}_0)} \int \prod_{m=0}^{L} p_m^{\alpha_m + x_m - 1} d\mathbf{p}\\
&= \frac{n!}{x_0! \dots x_L!} \frac{B(\mathbf{\alpha}_0 + \mathbf{x})}{B(\mathbf{\alpha}_0)}.
\end{split}
\end{equation}
On the other hand, the hypothesis corresponding to the FM phase (labeled $y^{(i)} = 1$) is $H_1: \text{Tr}(\rho_{\text{test}}^{(i)} O_{\text{FM}}) \neq 0$, and the true distribution of $\mathbf{X}$ follows $\text{Mul}(n, \mathbf{p}=(\text{Tr}(\rho_{\text{test}}^{(i)} \Pi_0), ..., \text{Tr}(\rho_{\text{test}}^{(i)} \Pi_L)), H_1) = \text{Mul}_1^{(i)}$. 
As a prior, we choose $\text{Dir}(\mathbf{\alpha} = \mathbf{\alpha}_1, H_1)$, where
\begin{equation}
\mathbf{\alpha}_1 = \frac{2}{(L+1)(L+2)}(1, ..., L+1) \quad \text{or} \quad \frac{2}{(L+1)(L+2)}(L+1, ..., 1).
\end{equation}
This choice is motivated by the fact that the FM phase in the one-dimensional cluster-Ising model is dominated by $-\sum_{i=1}^{L} Z_i Z_{i+1}$, whose ground states are superpositions of $\ket{0}^{\otimes L}$ and $\ket{1}^{\otimes L}$. 
The observed $\Pi_m$ values are thus more likely to be biased toward either small or large $m$.
The marginal likelihood under $H_1$ is 
similarly given by
\begin{equation}
p(\mathbf{x} \mid n, H_1) = \frac{n!}{x_0! \dots x_L!} \frac{B(\mathbf{\alpha}_1 + \mathbf{x})}{B(\mathbf{\alpha}_1)}.
\end{equation}

In Bayesian hypothesis testing, the Bayes factor is defined as the ratio of marginal likelihoods,
\begin{equation}
\text{BF}_{10} = \frac{p(\mathbf{x} \mid n, H_1)}{p(\mathbf{x} \mid n, H_0)} = \frac{B(\mathbf{\alpha}_1 + \mathbf{x})B(\mathbf{\alpha}_0)}{B(\mathbf{\alpha}_0 + \mathbf{x})B(\mathbf{\alpha}_1)},
\end{equation}
and the test is performed by comparing $\text{BF}_{10}$ to a threshold $c$. 
Since the prior distribution under $H_1$ is not uniquely determined, we adopt the Bayes factor corresponding to the distribution that deviates most from the threshold.
The Type-\Romannumeral{1} and Type-\Romannumeral{2} error probabilities are then given by
\begin{equation}
\alpha_n^{(i)} = \Pr(\text{BF}_{10} \geq c \mid \mathbf{X} \sim \text{Mul}_0^{(i)}, H_0), \quad
\beta_n^{(i)} = \Pr(\text{BF}_{10} < c \mid \mathbf{X} \sim \text{Mul}_1^{(i)}, H_1).
\end{equation}
The error probabilities $\alpha_n$ and $\beta_n$ computed in the main text are the averages of $\alpha_n^{(i)}$ and $\beta_n^{(i)}$ over the test data for each phase, respectively.

\subsection{QCNN and Exact QCNN}
\label{app:QCNN and cNeyman}
\paragraph*{Exact QCNN for the FM phase}

Exact QCNNs are constructed based on the Multiscale Entanglement Renormalization Ansatz (MERA)~\cite{Vidal2008} and the Multiscale String Operator (MSO)~\cite{Cong2019,Lake2025}. 
The MSO is defined as a sum of products of exponentially many string order parameters that act on different locations. 
For example, for the SPT phase, the MSO is expressed as
\begin{equation}\label{eq:MSO}
\text{MSO} = \sum_{ab} C_{ab}^{(1)} \mathcal{S}_{ab} + \sum_{a_1b_1a_2b_2} C_{a_1b_1a_2b_2}^{(2)} \mathcal{S}_{a_1b_1} \mathcal{S}_{a_2b_2} + \cdots,
\end{equation}
where the string order parameter $O_{\text{SPT}}$ in the main text is redefined as $\mathcal{S}_{ab} = Z_a X_{a+1} X_{a+3} \dots X_{b-3} X_{b-1} Z_b$. 
The observable in the Heisenberg picture for the Exact QCNN proposed in Ref.~\cite{Cong2019} (Fig.~\ref{fig:Exact QCNN circuits}(b)) matches this expression.

We consider an Exact QCNN for the FM phase characterized by the order parameter that is the linear combination of local observables, as $O_{\text{FM}}$ in the main text. 
Since this order parameter is not a string, we instead consider the long-range order parameter $\mathcal{S}_{ab} = Z_a Z_b$ and construct a circuit such that the observable in the Heisenberg picture matches Eq.~(\ref{eq:MSO}).
The Exact QCNN circuit for the FM phase is shown in Fig.~\ref{fig:Exact QCNN for FM}(a). 
The unitary operation combining the convolution and pooling layers at a certain depth $d$, $U_{\text{CP}}^{(d)}$, is defined as shown in Fig.~\ref{fig:Exact QCNN for FM}(b). 
Using the relations for Pauli propagation illustrated in Fig.~\ref{fig:Exact QCNN for FM}(c), the observable in the Heisenberg picture for this Exact QCNN becomes
\begin{equation}
\begin{split}
&(U_{\text{CP}}^{(depth)}\cdots U_{\text{CP}}^{(1)})^\dagger (I\otimes\cdots\otimes I\otimes Z\otimes I\otimes\cdots\otimes I\otimes Z) (U_{\text{CP}}^{(depth)}\cdots U_{\text{CP}}^{(1)}) \\
=& \sum_{ab} C_{ab}^{(1)}Z_aZ_b + \sum_{a_1b_1a_2b_2} C_{a_1b_1a_2b_2}^{(2)}Z_{a_1}Z_{b_1}Z_{a_2}Z_{b_2} + \cdots \\
=& \sum_{ab} C_{ab}^{(1)}\mathcal{S}_{ab} + \sum_{a_1b_1a_2b_2} C_{a_1b_1a_2b_2}^{(2)}\mathcal{S}_{a_1b_1}\mathcal{S}_{a_2b_2} + \cdots, 
\end{split}
\end{equation}
which matches the MSO in Eq.~(\ref{eq:MSO}).

In the main text (the subsection on error probabilities), it is evident that under a few copies of test data, the Exact QCNN for the FM phase performs worse than the order parameter. 
This suggests that Exact QCNNs may not be advantageous for quantum phases characterized by order parameters expressible as linear combinations of local observables.

\begin{figure}[t]
    \centering
    \includegraphics[width=0.6\columnwidth]{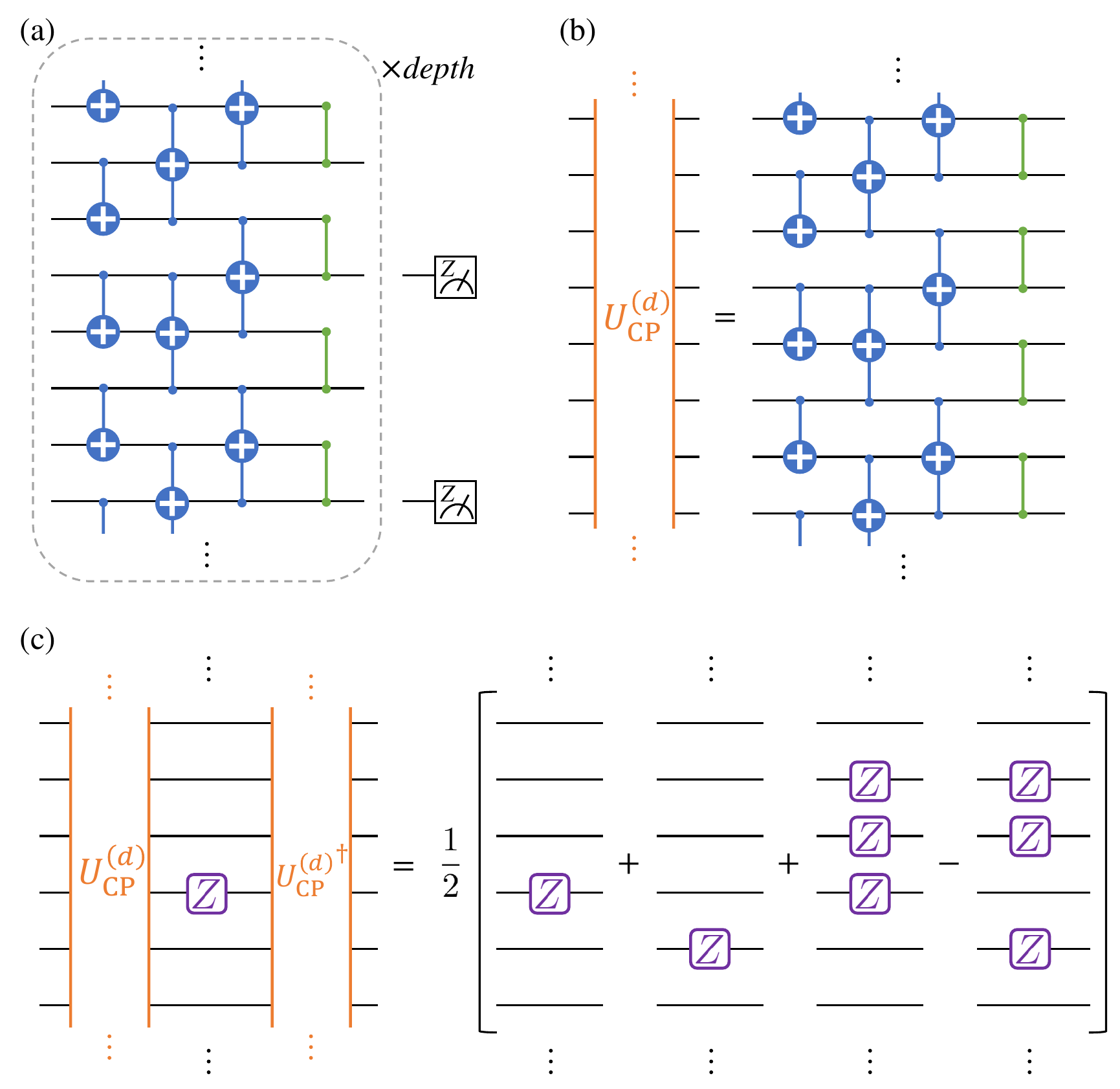}
    \caption{Exact QCNN for the FM phase. 
    Panel (a) shows the full Exact QCNN circuit, identical to Fig.~\ref{fig:Exact QCNN circuits}(a), where the convolution and pooling layers at a certain depth $d$ are denoted as $U_{\text{CP}}^{(d)}$, as shown in panel (b). 
    Panel (c) illustrates the equality ${U_{\text{CP}}^{(d)}}^\dagger Z_i U_{\text{CP}}^{(d)} = \frac{1}{2}(Z_i + Z_{i+1} + Z_{i-2}Z_{i-1}Z_i - Z_{i-2}Z_{i-1}Z_{i+1})$ obtained using Pauli propagation. 
    The blue two-qubit gates represent controlled--$X$ gates, the blue three-qubit gates represent Toffoli gates, the green two-qubit gates represent controlled--$Z$ gates, and the $ZZ$ measurement indicates a projective measurement in the computational basis.}
    \label{fig:Exact QCNN for FM}
\end{figure}

\paragraph*{Classical Neyman-Pearson test using QCNNs and Exact QCNNs}

The QCNN and the Exact QCNN classify quantum phases based on the expectation values of Pauli operators or Pauli strings with respect to the output quantum states (e.g., $ZXZ$ Pauli string: the Exact QCNN for the Trivial vs. SPT case). 
By performing the projective measurement described by a Pauli operator or Pauli string $n$ times, we test whether the output expectation value $f(\rho_{\text{test}}^{(i)})$ for input test data $\rho_{\text{test}}^{(i)}$ exceeds $0.5$ or not. 
The hypothesis corresponding to the trivial phase (labeled $y^{(i)} = 0$) is $H_0: f(\rho_{\text{test}}^{(i)}) \leq 0.5$, and the hypothesis corresponding to the non-trivial phase (labeled $y^{(i)} = 1$) is $H_1: f(\rho_{\text{test}}^{(i)}) > 0.5$.
The test described in case (ii) of Thm.~\ref{thm:UMP pauli string} with $\ev{O}_{\text{th}} = 0.5$ is therefore UMP for these hypotheses. 
We refer to this test as the classical Neyman-Pearson test for QCNNs and Exact QCNNs.

The Type-\Romannumeral{1} error probability for test data labeled $y^{(i)} = 0$ is given by
\begin{equation}
\alpha_n^{(i)} = \sum_{x > c} \binom{n}{x} \qty(\frac{1 + f(\rho_{\text{test}}^{(i)})}{2})^x 
\qty(\frac{1 - f(\rho_{\text{test}}^{(i)})}{2})^{n-x} 
+ \gamma \binom{n}{c} \qty(\frac{1 + f(\rho_{\text{test}}^{(i)})}{2})^c 
\qty(\frac{1 - f(\rho_{\text{test}}^{(i)})}{2})^{n-c},
\end{equation}
and the Type-\Romannumeral{2} error probability for test data labeled $y^{(i)} = 1$ is given by
\begin{equation}
\beta_n^{(i)} = \sum_{x < c} \binom{n}{x} \qty(\frac{1 + f(\rho_{\text{test}}^{(i)})}{2})^x 
\qty(\frac{1 - f(\rho_{\text{test}}^{(i)})}{2})^{n-x} 
+ (1 - \gamma) \binom{n}{c} \qty(\frac{1 + f(\rho_{\text{test}}^{(i)})}{2})^c 
\qty(\frac{1 - f(\rho_{\text{test}}^{(i)})}{2})^{n-c}.
\end{equation}
The error probabilities $\alpha_n$ and $\beta_n$ computed in the numerical results are the averages of $\alpha_n^{(i)}$ and $\beta_n^{(i)}$ over the test data for each phase, respectively.

\end{document}